\newtheorem{theorem}{Theorem}
\newtheorem{lemma}{Lemma}
\newtheorem{corollary}{Corollary}
\newtheorem{proposition}{Proposition}
\newtheorem{definition}{Definition}
\newcolumntype{L}[1]{>{\raggedright\let\newline\\arraybackslash\hspace{0pt}}m{#1}}
\newcolumntype{C}[1]{>{\centering\let\newline\\arraybackslash\hspace{0pt}}m{#1}}
\newcolumntype{R}[1]{>{\raggedleft\let\newline\\arraybackslash\hspace{0pt}}m{#1}}
\begin{document}

\begin{titlepage}
\title{Portfolio Analysis Based on Markowitz Stochastic Dominance Criteria: A Behavioral Perspective}
\author{Peng Xu\thanks{University of Southampton Business School, University of Southampton, Southampton SO16 7QF, England, United Kingdom. All rights reserved. \\ Email: p.xu@soton.ac.uk}}
\date{July 31, 2026}
\maketitle
\begin{abstract}
\noindent This paper develops stochastic optimization problems for describing and analyzing behavioral investors with Markowitz Stochastic Dominance (MSD) preferences. Specifically, we establish dominance conditions in a discrete state-space to capture all inverse S-shaped utility functions consistent with MSD, as well as all subjective decision weights generated by inverse S-shaped probability weighting functions. We demonstrate that these dominance conditions can be admitted as linear constraints into the stochastic optimization problems to formulate computationally tractable mixed-integer linear programming (MILP) models. We then employ the developed MILP models in financial portfolio analysis and examine classic behavioral factors such as reference point and subjective probability distortion in behavioral investors' portfolio decisions.\\
\vspace{0in}\\
\noindent\textbf{Keywords:} Portfolio Analysis, Stochastic Optimization, Markowitz Stochastic Dominance, Reference Point, Subjective Probability Distortion\\
\vspace{0in}\\
% \noindent\textbf{JEL Codes:} key1, key2, key3\\

\bigskip
\end{abstract}
\setcounter{page}{0}
\thispagestyle{empty}
\end{titlepage}
\pagebreak \newpage

\doublespacing

\section{Introduction} \label{sec1}

Stochastic dominance (SD) is a widely used decision rule in finance, economics, and decision sciences. SD allows pairwise comparisons of investment alternatives with uncertain returns without requiring complete information on investors' risk preferences (see, \citealt{Levy2016}, for an overview). In finance, common orders of SD include First-order Stochastic Dominance (FSD; \citealt{QuirkSaposnik1962}), Second-order Stochastic Dominance (SSD; \citealt{HadarRussell1969, HanochLevy1969, RothschildStiglitz1970}), and Third-order Stochastic Dominance (TSD; \citealt{Whitmore1970}). These dominance criteria have found roots in Expected Utility Theory (EUT; \citealt{NeumannMorgenstern1944}), under which investors are assumed to be rational decision makers and to follow the axioms of EUT in portfolio decisions. In particular, FSD assumes that the investor is non-satiable. Any expected utility maximizing investor with a non-decreasing utility function would not prefer a portfolio that is dominated by another according to FSD. Moreover, SSD assumes that the investor is a risk-averter whose risk preference is captured by a non-decreasing concave utility function. A risk-averse expected utility maximizing investor would not prefer a dominated portfolio in the sense of SSD. In addition to SSD, TSD assumes also that the investor exhibits a positive skewness preference for the probability distribution governing the uncertain returns to be ranked. Most SD research efforts in recent decades have focused on rational decision making under risk and uncertainty, along with the development of SD-constrained stochastic optimization problems and their applications (see, among others, \citealt{DentchevaRuszczynski2003, Post2003, Kuosmanen2004, DentchevaRuszczynski2006, KopaPost2009, DentchevaRuszczynski2010, DupacovaKopa2012, DupacovaKopa2014, KopaPost2015, BruniEtAl2017, PostKopa2017, KallioHardoroudi2018, KopaEtAl2018, LiesioEtAl2020, LiesioEtAl2023, PengDelage2024, Xu2024, CesaronePuerto2025, KopaKozmik2025, Neugebauer2026, KopaEtAl2026}). Such SD-based optimization approaches typically enable us to identify an optimized portfolio that stochastically dominates some pre-specified benchmark return distribution according to the SD criteria.

Over the decades, behavioral studies have revealed strong empirical evidence that decision makers systematically deviate from the rational ideals of EUT in decision making (see, e.g.,\citealt{Allais1953, Simon1955}; \citealt{Starmer2000}). For instance, \cite{MarkowitzSD1952} argues that investors tend to be risk-seeking over gains but risk-averse over losses, where gains and losses are perceived relative to a reference point.  Such risk preferences are often referred to as Markowitz preferences in behavioral finance. \cite{LevyLevy2002} conduct experimental studies using mixed prospects on subjects involving students, university professors, and financial practitioners such as portfolio and mutual fund managers. They find empirical support for characterizing Markowitz preferences with non-linear inverse S-shaped utility functions that are convex in the domain of gains and concave in that of losses. \cite{PostLevy2005} analyze aggregate investor preferences through market efficiency tests based on several SD criteria. In particular, they use the 25 Fama-French and 27 Carhart portfolios formed on market capitalization (size), book-to-market equity (BE/ME) ratio, and momentum to examine whether the market portfolio is SD-efficient. Similarly, their empirical analysis suggests that the market portfolio is efficient for investors with Markowitz preferences, thereby providing further evidence in support of inverse S-shaped utility functions as well as inverse S-shaped probability weighting functions. Moreover, \cite{EgozcueEtAl2011} study Markowitz preferences and show that investors with inverse S-shaped utility functions allocate their entire wealth to a single asset. However, developing and solving stochastic optimization problems involving inverse S-shaped utilities is not straightforward in practice. First, it requires an explicit specification on the investor's risk preferences, but eliciting these preferences can be time-consuming or sometimes impossible (see, e.g., \citealt{Moskowitz1993, LiesioSalo2012, VilkkumaaEtAl2018}). Second, solving a non-linear optimization model can give rise to computational challenges, as any solutions found are not necessarily guaranteed optimal and may require further justification.

\cite{LevyLevy2002} first propose Markowitz Stochastic Dominance (MSD) to generalize all inverse S-shaped utility functions in accordance with stochastic dominance conditions. Specifically, if one portfolio stochastically dominates another according to MSD, then any investor with a non-decreasing inverse S-shaped utility function would prefer the dominant portfolio. \cite{BaucellsHeukamp2006} extend to establish Weighted Markowitz Stochastic Dominance (MWSD) by incorporating all inverse S-shaped probability weighting functions. This allows MSD investors to apply subjective decision weights by distorting state probabilities of gains and losses. While these MSD criteria may appear theoretically appealing, they can only help determine whether one portfolio dominates another by MSD or MWSD through pairwise comparison, whereas the utilization of these dominance criteria in stochastic optimization remains unexplored in relevant operations research and finance literature yet.

This research aims to extend the theoretical foundations and advance the methodological developments for Markowitz Stochastic Dominance. In particular, we first establish the necessary and sufficient dominance conditions based on MSD and MWSD criteria in a discrete state-space. Then, we introduce the first stochastic optimization problems based on these MSD criteria allowing incomplete preference information. More specifically, we develop portfolio optimization models utilizing the MSD and MWSD criteria to accommodate incomplete preference information on both utility functions and probability weighting functions (PWFs). Our theoretical and methodological contributions enable us to: (i) capture all MSD preferences, which are typically characterized by non-linear inverse S-shaped utility functions, and (ii) incorporate all subjective decision weights given by a pair of non-linear inverse S-shaped PWFs, using only linear constraints. We further demonstrate that MSD- and MWSD-based portfolio optimization models can be formulated as computationally tractable mixed-integer linear programming (MILP) problems. These developed MILP approaches can identify an optimal portfolio that is preferred to a given benchmark by all MSD, MWSD investors. Moreover, their application to real-world financial data also empirically examines classic behavioral factors such as reference point and subjective probability distortion when describing MSD and MWSD investors' portfolio choice behavior.

Moreover, there are several other MSD related research strands worth acknowledging. \cite{ArvanitisTopaloglou2017} develop a consistent non-parametric approach to test the MSD efficiency for dynamic financial time-series data. The test statistics proposed in their work can be computationally implemented through mixed-integer programming (MIP) and linear programming (LP) formulations. They find that the benchmark market portfolio is generally MSD-inefficient to all possible sets of portfolios formed on asset size, book-to-market value, momentum, and industry. \cite{KofinaEtAl2025} recently extend the MSD efficiency test on several global market tracking indices and report consistent international evidence of MSD inefficiency. Moreover, \cite{ArvanitisEtAl2020} develop non-parametric spanning tests (see, e.g., \citealt{ArvanitisEtAl2019}) for MSD accompanied by computational LP and MIP formulations. Similarly, they also document empirical evidence rejecting the MSD efficiency of the benchmark market portfolio. More broadly, this research is well-positioned within portfolio decision analysis (PDA; \citealt{SaloEtAl2011}) literature. The models developed here are essentially data-driven ones and can use observations of random variables as direct inputs (\citealt{BertsimasKallus2020}). Consequently, they can be readily applied to descriptive behavioral analyses in other decision contexts, such as project portfolio selection, project scheduling, and resource allocation problems (for a recent review, see \citealt{LiesioEtAl2021}). In such settings, the decision alternatives (cf. portfolios) to be compared would correspond to feasible solutions of a stochastic optimization problem with uncertainties captured by a finite state-space (\citealt{GustafssonSalo2005, Gutjahr2015, HarrisMazibas2022, CinfrigniniEtAl2025, CesaroneEtAl2026}).

The rest of this paper unfolds as follows. Section \ref{sec2} introduces two Markowitz Stochastic Dominance (MSD) criteria. Section \ref{sec3} establishes dominance conditions for these MSD criteria in a discrete setting. Section \ref{sec4} develops stochastic portfolio optimization problems that incorporate MSD and MWSD constraints and presents their resulting mixed-integer linear programming (MILP) formulations. Section \ref{sec5} applies the developed MILP approaches to empirical financial data and reports behavioral findings. Finally, Section \ref{sec6} concludes the study and suggests avenues for future research.

\section{Markowitz Stochastic Dominance Criteria} \label{sec2}

An MSD investor exhibits distinct domain-specific preferences to perceived returns, namely risk-seeking over gains but risk-averse over losses, where gains and losses are relative terms defined by a certain reference point, a well-known behavioral factor. Such MSD preferences are typically characterized by some inverse S-shaped utility function $u: [a,b] \rightarrow \mathbb{R}$ that maps individual returns to utilities. The reference point $r$ partitions these returns into gains $[r, b]$ and losses $[a, r]$. We use $U_\mathbb{M}$ to denote the set of all non-decreasing inverse S-shaped utility functions that are convex in gains $t^+ \in [r, b]$ and concave in losses $t^- \in [a, r]$, and $F_X(t)$ is the cumulative distribution function (CDF) of some portfolio $X$.

According to MSD, a behavioral investor prefers a portfolio $X$ whose expected Markowitz value
\begin{equation*}
    \mathbb{V}[u(X)] = \int_a^b u(t)dF_X(t) = \int_a^r u(t^-)dF_X(t^-) + \int_r^b u(t^+)dF_X(t^+)
\end{equation*}
is the greatest among any other marketed portfolios under consideration. Formally, any MSD investor would prefer a portfolio whose Markowitz value is greater than or equal to that of another for all inverse S-shaped utility functions in $U_\mathbb{M}$.

\begin{definition}[MSD] \label{def:msd}
Portfolio $X$ weakly dominates portfolio $Y$ by MSD, denoted by $X \succeq Y$, if
$$\mathbb{V}[u(X)] \geq  \mathbb{V}[u(Y)] ~\forall~ u \in U_\mathbb{M}.$$
\end{definition}

\begin{proposition}[\sc \citealt{LevyLevy2002}] \label{prop:msd}
Let $X$ and $Y$ be portfolios. Then $X \succeq Y$ if and only if
\begin{eqnarray}
& &\int_{t^+}^b ~\big[F_Y(t) - F_X(t)\big]dt \geq 0 ~\forall~ t^+ \in [r, b] \label{eq:msd_gains} \\
& \text{ and } \nonumber \\
& &\int_a^{t^-} \big[F_Y(t) - F_X(t)\big]dt \geq 0 ~\forall~ t^- \in [a, r].  \label{eq:msd_losses}
\end{eqnarray}
\end{proposition}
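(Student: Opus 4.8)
The plan is to prove this by reducing the Markowitz-value inequality to statements about integrated cumulative distribution functions via integration by parts, and then exploiting the characterization of the preference class $U_\mathbb{M}$ through its extreme points. The key observation is that $\mathbb{V}[u(X)] - \mathbb{V}[u(Y)] = \int_a^b u(t)\,d[F_X(t) - F_Y(t)]$, so that $X \succeq Y$ is equivalent to $\int_a^b u(t)\,d[F_Y(t) - F_X(t)] \le 0$ for all $u \in U_\mathbb{M}$. Applying integration by parts on the gain region $[r,b]$ and on the loss region $[a,r]$ separately, I would rewrite this functional in terms of $u'$ and the integrated difference $\int [F_Y - F_X]$; a second integration by parts over each region then brings in $u''$ (as a measure) and the doubly-integrated difference. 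The sign structure of $u''$ — nonpositive on losses, nonnegative on gains — is what will pair against the two quantities appearing in \eqref{eq:msd_gains} and \eqref{eq:msd_losses}.

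\textbf{Sufficiency.} Assuming \eqref{eq:msd_gains} and \eqref{eq:msd_losses} hold, I would show directly that the integral representation of $\mathbb{V}[u(X)] - \mathbb{V}[u(Y)]$ obtained above is nonnegative for every $u \in U_\mathbb{M}$. On the loss region, the relevant kernel is $\int_a^{t^-}[F_Y - F_X]\,dt \ge 0$, which gets multiplied by $-u''(t^-) \ge 0$ (concavity); on the gain region the kernel $\int_{t^+}^b [F_Y - F_X]\,dt \ge 0$ gets multiplied by $u''(t^+) \ge 0$ (convexity). The boundary terms from the integrations by parts need care: one must check that the terms evaluated at $a$, $r$, and $b$ either vanish, combine telescopically, or carry a favorable sign coming from $u' \ge 0$ (monotonicity) together with the sign of the singly-integrated differences at those points. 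This bookkeeping of boundary contributions — in particular handling the value at the reference point $r$ and ensuring the "gains" and "losses" pieces fit together consistently — is where I expect the argument to be most delicate.

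\textbf{Necessity.} For the converse I would argue contrapositively: if one of \eqref{eq:msd_gains} or \eqref{eq:msd_losses} fails at some point $t_0$, I construct a specific $u \in U_\mathbb{M}$ witnessing $\mathbb{V}[u(X)] < \mathbb{V}[u(Y)]$. If \eqref{eq:msd_losses} fails at $t_0 \in (a,r)$, choose $u$ to be (a smooth approximation of) a "tent"/piecewise-linear concave function on $[a,r]$ whose second derivative is concentrated near $t_0$, and extend it linearly and nondecreasingly across $[r,b]$ so it lies in $U_\mathbb{M}$; the integral representation then reduces essentially to $-u''$ times the violated kernel, producing the wrong sign. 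Symmetrically, if \eqref{eq:msd_gains} fails at some $t_0 \in (r,b)$, use a convex "corner" function on the gain side and a nondecreasing extension to the loss side. One then passes to the limit of these approximations (or invokes a density/closure argument for $U_\mathbb{M}$) to conclude. The main obstacle throughout is the careful treatment of the two-region structure and the reference-point boundary terms; the rest is the standard SD integration-by-parts calculation, which I would not belabor.
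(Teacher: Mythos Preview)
The paper does not supply its own proof of this proposition: it is stated as a cited result from \cite{LevyLevy2002} and is used as a black box in the proof of Theorem~\ref{th:discrete_MSD}. So there is no in-paper argument to compare against.

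That said, your outline is the standard and correct route for establishing such SD characterizations, and it is essentially the argument one finds in the original source. A single integration by parts over $[a,b]$ kills the boundary terms (since $F_X$ and $F_Y$ agree at $a$ and $b$), leaving $\int_a^b u'(t)\,[F_Y(t)-F_X(t)]\,dt$; splitting at $r$ and integrating by parts once more against $G(t)=\int_a^t[F_Y-F_X]$ on losses and $H(t)=\int_t^b[F_Y-F_X]$ on gains produces exactly the pairing of $-u''\ge 0$ with $G\ge 0$ on $[a,r]$ and $u''\ge 0$ with $H\ge 0$ on $[r,b]$, plus the two boundary contributions $u'(r)G(r)\ge 0$ and $u'(r)H(r)\ge 0$. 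Your necessity direction via corner/tent test functions concentrated at a violation point is likewise the standard construction. The only place to be a little more explicit is that the two boundary terms at $r$ both carry the sign $u'(r)\ge 0$ and do \emph{not} need to cancel against each other---each is separately nonnegative under \eqref{eq:msd_gains}--\eqref{eq:msd_losses}---so the ``delicate bookkeeping'' you anticipate is in fact benign.
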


Proposition \ref{prop:msd} formally presents two dominance conditions according to MSD, both of which compare the integrated CDFs of two return distributions. In particular, condition \eqref{eq:msd_gains} states that the integrated CDF of the dominant portfolio $X$ remains below (or equal to) that of the dominated portfolio $Y$ for all return levels in gains $t^+ \in [r,b]$, and condition \eqref{eq:msd_losses} imposes the same requirement for all return levels in losses $t^- \in [a,r]$.

In addition to reference point, another widely documented behavioral factor in finance is investors' tendency to subjectively distort state probabilities of the perceived gains and losses. It has been well-observed that behavioral investors overestimate low state probabilities while underestimate high ones in a subjective manner. Such distortions are normally modeled by inverse S-shaped probability weighting functions that are concave for small probabilities and convex for moderate to large probabilities (see, e.g., \citealt{Wakker2010}). Specifically, we use $w^- \in \Omega^-$ and $w^+ \in \Omega^+$ to denote the probability weighting functions (PWFs) in the domains of losses and gains, respectively, where $\Omega^-$ and $\Omega^+$ are the corresponding feasible sets containing all strictly increasing inverse S-shaped PWFs. Note that these PWFs transform cumulative state probabilities rather than individual likelihoods of losses and gains into subjective decision weights.

In turn, an MSD investor with an inverse S-shaped utility function, who distorts cumulative state probabilities, prefers a portfolio $X$ whose  subjectively weighted (or distorted expected) Markowitz value
\begin{equation*}
    \mathbb{V}_{w^-}^{w^+}[u(X)] = \int_a^r u(t^-)d\left[w^-\left(F_X(t^-)\right)\right] + \int_r^b u(t^+)d\left[-w^+\left(1-F_X(t^+)\right)\right]
\end{equation*}
is the greatest among all other available portfolios. Therefore, an MSD investor maximizing the subjectively weighted Markowitz value would prefer a portfolio whose value is no less than that of $Y$ for all inverse S-shaped utility functions in $U_\mathbb{M}$ and for all inverse S-shaped PWFs in sets $\Omega^-$ and $\Omega^+$, as formally defined below.

\begin{definition}[MWSD] \label{def:mwsd}
Portfolio $X$ weakly dominates portfolio $Y$ by MWSD, denoted by $X \succeq_{d^-}^{d^+} Y$, if
$$\mathbb{V}_{w^-}^{w^+}[u(X)] \geq \mathbb{V}_{w^-}^{w^+}[u(Y)] ~\forall~ u \in U_\mathbb{M}, ~w^- \in \Omega^{d^-}, ~w^+ \in \Omega^{d^+},$$
where set $\Omega^d$ contains all strictly increasing probability weighting functions $w: [0,1] \rightarrow [0,1]$ that are concave on $[0,d]$ (see Figure 2 of \citealt{BaucellsHeukamp2006}).
\end{definition}

\begin{proposition}[\sc \citealt{BaucellsHeukamp2006}] \label{prop:mwsd}
Let $X$ and $Y$ be portfolios. Then $X \succeq^{d^+}_{d^-} Y$ if and only if
\begin{eqnarray}
&X \succeq Y \nonumber \\
& \text{ and } \nonumber \\
&F_X(t) \leq F_Y(t)~\forall~t \in [t_d^-, t_d^+), \label{eq:mwsd}
\end{eqnarray}
where
\begin{eqnarray}
t_d^- &= \sup & \bigg( \big\{a\big\} \cup \big\{t\leq r~\big|~F_X(t) \leq d^-,~F_Y(t) \leq d^- \big\} \bigg), \label{eq:tdl} \\ 
t_d^+ &= \inf & \bigg( \big\{b\big\} \cup \big\{t\geq r~\big|~F_X(t) \geq 1-d^+,~F_Y(t) \geq 1-d^+ \big\}\bigg). \label{eq:tdr}
\end{eqnarray}
\end{proposition}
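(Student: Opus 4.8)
The plan is to prove the two implications of the ``if and only if'' separately; necessity is the quicker half, so I would treat it first. For the direction $X\succeq_{d^-}^{d^+}Y\Rightarrow X\succeq Y$: the identity map $w(p)=p$ is strictly increasing and, being linear, concave on every subinterval $[0,d]$, hence lies in both $\Omega^{d^-}$ and $\Omega^{d^+}$; for this choice $\mathbb{V}_{w^-}^{w^+}$ reduces to $\mathbb{V}$ and the defining inequality of MWSD becomes verbatim that of MSD. For the second condition I would fix $t_0$ in the interior of $[t_d^-,t_d^+)$ and rule out $F_X(t_0)>F_Y(t_0)$ by a near-extremal construction. Say $t_0\in[r,t_d^+)$: then $t_0<t_d^+$ together with \eqref{eq:tdr} forces at least one of $1-F_X(t_0)$, $1-F_Y(t_0)$ to exceed $d^+$, and since the hypothetical violation gives $1-F_Y(t_0)>1-F_X(t_0)$ it must be that $1-F_Y(t_0)>d^+$. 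One then picks a threshold $\ell\in\big(\max\{1-F_X(t_0),d^+\},\,1-F_Y(t_0)\big)$ and a strictly increasing $w^+$ that is $\varepsilon$-close to $0$ below $\ell$ and to $1$ above $\ell$; because $\ell\ge d^+$ the steep rise sits outside $[0,d^+]$, so $w^+\in\Omega^{d^+}$. Pairing $w^+$ with any $w^-\in\Omega^{d^-}$ and the ramp value function $u(t)=(t-t_0)^+\in U_\mathbb{M}$, the difference $\mathbb{V}_{w^-}^{w^+}[u(X)]-\mathbb{V}_{w^-}^{w^+}[u(Y)]$ equals $\int_{t_0}^{b}\big[w^+(1-F_X(t))-w^+(1-F_Y(t))\big]\,dt$, which that choice of $w^+$ makes strictly negative, contradicting dominance. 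The mirror construction with a near-step $w^-$ (threshold in $(\max\{F_Y(t_0),d^-\},F_X(t_0))$) and $u(t)=-(t_0-t)^+$ handles $t_0\in(t_d^-,r]$, and the left endpoint $t_0=t_d^-$ follows by letting $t_0\downarrow t_d^-$ and using right-continuity of the CDFs.

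For sufficiency I would assume $X\succeq Y$---so by Proposition~\ref{prop:msd} the integrated-CDF inequalities \eqref{eq:msd_gains}--\eqref{eq:msd_losses} hold---together with $F_X\le F_Y$ on $[t_d^-,t_d^+)$, and prove $\mathbb{V}_{w^-}^{w^+}[u(X)]\ge \mathbb{V}_{w^-}^{w^+}[u(Y)]$ for all $u\in U_\mathbb{M}$ and $w^\pm\in\Omega^{d^\pm}$. Normalizing $u(r)=0$ (the usual convention for reference-dependent value functions), an integration by parts rewrites the objective difference as $\int_a^r\big[w^-(F_Y(t))-w^-(F_X(t))\big]\,du(t)+\int_r^b\big[w^+(1-F_X(t))-w^+(1-F_Y(t))\big]\,du(t)$, where $du$ is a non-negative measure whose density is non-increasing on $[a,r]$ and non-decreasing on $[r,b]$. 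Every such $u$ is a non-negative mixture of the extremal value functions $u(t)=-(s-t)^+$ with $s\in(a,r]$ and $u(t)=(t-s)^+$ with $s\in[r,b)$ (the concave and convex pieces of $u$, each extended by the constant $0$ past $r$), so it suffices to show $\int_a^s\big[w^-(F_Y)-w^-(F_X)\big]\,dt\ge0$ for every $s\le r$ and $\int_s^b\big[w^+(1-F_X)-w^+(1-F_Y)\big]\,dt\ge0$ for every $s\ge r$.

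I would prove the gain-side inequality; the loss side is symmetric. Using $w^+(x)=\int_0^1\mathbf{1}\{q<x\}\,dw^+(q)$ and Fubini, $\int_s^b\big[w^+(1-F_X)-w^+(1-F_Y)\big]\,dt=\int_0^1 B(q)\,dw^+(q)$ with $B(q)=\big|\{t\in[s,b]:1-F_X(t)>q\}\big|-\big|\{t\in[s,b]:1-F_Y(t)>q\}\big|$. I split the $q$-integral at $d^+$. For $q\ge d^+$, the definition \eqref{eq:tdr} confines both sets $\{t:1-F_X(t)>q\}$ and $\{t:1-F_Y(t)>q\}$ to $[a,t_d^+)$, on which $[s,t_d^+)\subseteq[t_d^-,t_d^+)$ and hence $1-F_X\ge1-F_Y$; so $B(q)\ge0$ pointwise and $\int_{d^+}^1 B(q)\,dw^+(q)\ge0$. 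For $q\le d^+$, concavity of $w^+$ makes the restriction of $dw^+$ to $[0,d^+]$ a measure with non-increasing density, so an Abel summation (a second integration by parts) reduces $\int_0^{d^+}B(q)\,dw^+(q)\ge0$ to the family $\int_0^c B(q)\,dq\ge0$, $c\in(0,d^+]$. By Fubini once more, $\int_0^c B(q)\,dq=\int_s^b\big[\min\{1-F_X(t),c\}-\min\{1-F_Y(t),c\}\big]\,dt$, and a short case split---on any sub-interval where this integrand is negative one has $1-F_Y>c$, so $c$ times its length is dominated by the integral of $1-F_Y$ there---collapses it to $\int_A^b[F_Y-F_X]\,dt\ge0$ for some $A\ge r$, which is exactly \eqref{eq:msd_gains}. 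The loss-side inequality follows verbatim after swapping $1-F_Z\leftrightarrow F_Z$, $d^+\leftrightarrow d^-$, survival thresholds for CDF thresholds, and \eqref{eq:msd_gains} for \eqref{eq:msd_losses}.

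I expect the real work to be the bookkeeping inside sufficiency rather than any single deep idea: checking that the integration-by-parts boundary terms vanish (hence the role of $u(r)=0$), that the two ramp families genuinely span $U_\mathbb{M}$, and---above all---the case analysis behind $\int_s^b[\min\{1-F_X,c\}-\min\{1-F_Y,c\}]\,dt\ge0$, where one must locate the region on which $F_X>F_Y$ relative to both $t_d^+$ and the truncation level $c$ before it can be absorbed into the Markowitz integrated-CDF inequality. A secondary source of friction throughout is the precise handling of the $\sup$ and $\inf$ in \eqref{eq:tdl}--\eqref{eq:tdr}: one-sided limits, possible probability atoms at $a$, $r$, or $b$ (which can, for instance, put an atom of $dw^+$ at $0$), and the half-open nature of $[t_d^-,t_d^+)$ all need to be tracked carefully, especially near $t_d^-$.
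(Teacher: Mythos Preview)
The paper does not prove Proposition~\ref{prop:mwsd}: it is quoted as a known result of \cite{BaucellsHeukamp2006} and used as background for the paper's own discrete-state results (Theorems~\ref{th:discrete_MSD}--\ref{th:milp_mwsd}). There is therefore no in-paper argument to compare your proposal against.

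On its own merits, your outline follows the natural route for such a characterization. The necessity half is sound: the identity weight recovers MSD, and your near-step construction for $w^+$ (respectively $w^-$) correctly exploits that $t_0<t_d^+$ forces $1-F_Y(t_0)>d^+$ under the hypothetical violation, so the steep rise can be placed outside $[0,d^+]$. For sufficiency, the decomposition of $u\in U_\mathbb{M}$ into ramp functions and of $w^\pm$ via the layer-cake formula, followed by an Abel rearrangement on the concave piece of $dw^+$, is the right skeleton. One point to tighten: your assertion that ``on any sub-interval where this integrand is negative one has $1-F_Y>c$'' is not correct as written---the integrand $\min\{1-F_X,c\}-\min\{1-F_Y,c\}$ is negative precisely where $1-F_X<\min\{1-F_Y,c\}$, which allows $1-F_Y\le c$ (namely the case $1-F_X<1-F_Y\le c$, where the integrand equals $F_Y-F_X$). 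You already flag this case analysis as the crux, and it is; what makes it go through is that the region where $F_X>F_Y$ lies entirely in $[t_d^+,b]$, on which both survival functions are already at most $d^+$, so for each fixed $c\le d^+$ one can compare the truncated integrals by locating the crossing points $\sup\{t:1-F_X(t)>c\}$ and $\sup\{t:1-F_Y(t)>c\}$ relative to $t_d^+$ and then invoke the MSD inequality \eqref{eq:msd_gains} at the appropriate lower limit. The remaining bookkeeping you list (boundary terms, atoms, half-open interval) is real but routine.
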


In addition to MSD, Proposition \ref{prop:mwsd} enforces FSD between portfolios $X$ and $Y$ on the interval $[t_d^-, t_d^+)$. Specifically, condition \eqref{eq:mwsd} requires the CDF of the dominant portfolio $X$ to remain below that of the dominated portfolio $Y$ in all return levels within $[t_d^-, t_d^+)$ excluding the upper bound $t_d^+$. The threshold parameters $d^-,d^+ \in [0,1]$ specify the length of this interval. Moreover, increasing the parameter values $d^-$ and $d^+$ decreases the size of the corresponding feasible sets of probability weighting functions, $\Omega^{d^-}$ and $\Omega^{d^+}$, respectively (see Definition \ref{def:mwsd}). As a result, if MWSD holds between portfolios $X$ and $Y$ for certain values of $d^-$ and $d^+$, then the dominance relation holds also for any greater values. Formally, for all values $d^- \leq \tilde{d}^-$ and $d^+ \leq \tilde{d}^+$, it can be implied that
\begin{equation}\label{eq:implied_mwsd_relation}
X \succeq^{d^+}_{d^-} Y \Rightarrow X \succeq^{\tilde{d}^+}_{\tilde{d}^-} Y.
\end{equation}

\section{Establishing Markowitz Stochastic Dominance Conditions in Discrete State-Space} \label{sec3}

This section develops dominance conditions for the MSD criteria in a discrete state-space. We assume that the underlying state-space $S$ is discrete, consisting of $n$ mutually exclusive and collectively exhaustive states, i.e., $S = \{s_1, \dots, s_n\}$, with the state probabilities $p = (p_1, ..., p_n) \in [0,1]^n$ and $\sum_{i=1}^n p_i =1$. In particular, we consider a pair of portfolios whose returns are modeled by random variables $X$ and $Y$ supported on the interval $[a, b] \subset \mathbb{R}$, which contains all possible return outcomes. The state-specific returns of portfolios $X$ and $Y$ are captured by vectors $x = (x_1,\dots,x_n) = (X(s_1),\dots,X(s_n)) \in \mathbb{R}^n$ and $y = (y_1,\dots,y_n) = (Y(s_1),\dots,Y(s_n)) \in \mathbb{R}^n$, respectively. Without loss of generality, we also assume for the remainder of this research that the states $\{s_1, \dots, s_n\}$ in $S$ are indexed in ascending order of the state-specific returns of the benchmark portfolio $Y$ such that $y_1 \leq y_2 \leq \dots \leq y_{n-1} \leq y_n$. In addition, the reference point $r$ is chosen from one of the benchmark's state-specific returns, namely $r \in \{y_i~|~ i \in N\}$, where $N = \{1, \dots, n\}$.

Moreover, the expected return of portfolio $X$ is denoted by 
\begin{equation*}
\mathbb{E}[X]=\sum_{i=1}^n p_i x_i
\end{equation*}
and its cumulative distribution function (CDF) is given by
\begin{equation*} \label{eq:cdf}
F_X(t)=\mathbb{P}\bigg(\bigg\{s_i \in S ~\bigg|~ X(s_i) \leq t \bigg\}\bigg)=\sum_{i \atop x_i \leq t} p_i.
\end{equation*}
Furthermore, integrating the CDF of portfolio $X$ yields
\begin{equation} \label{eq:icdf}
F^{2}_X(t) = 
\int_{- \infty}^t F_X(\tau)~d\tau = \int_{a}^t F_X(\tau)~d\tau = \sum_{i \atop x_i \in [a, t]} p_i (t -x_i) = \sum_{i=1}^n p_i\max\big\{t- x_i, ~0\big\}.
\end{equation}

With this notation, we establish the necessary and sufficient conditions for portfolio $X$ to dominate portfolio $Y$ by MSD. These dominance conditions are formally stated in the following theorem.

\begin{theorem} \label{th:discrete_MSD}
Consider two portfolios $X$ and $Y$. Then $X \succeq Y$ if and only if
\begin{eqnarray}
& F^2_Y(b)-F^2_Y(x_i) \geq F^2_X(b)-F^2_X(x_i) ~ \forall ~ i \in N ~s.t.~ x_i > r \label{eq:th_gains} \\ 
& \text{ and } \nonumber \\
& F^2_Y(y_i) \geq F^2_X(y_i) ~ \forall ~  i \in N ~s.t.~ y_i \leq r. \label{eq:th_losses}
\end{eqnarray}
\end{theorem}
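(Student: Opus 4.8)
The plan is to derive Theorem~\ref{th:discrete_MSD} from Proposition~\ref{prop:msd} by first rewriting the two integral inequalities \eqref{eq:msd_gains}--\eqref{eq:msd_losses} in terms of the integrated CDF $F^2$, and then collapsing the continuum of inequalities (one for each $t^{\pm}$) down to the finite list indexed by the state-specific returns. Writing $\phi(t):=F^2_Y(t)-F^2_X(t)$ and using $F^2_Z(t)=\int_a^t F_Z(\tau)\,d\tau$ for $Z\in\{X,Y\}$, condition \eqref{eq:msd_losses} becomes $\phi(t)\ge 0$ for all $t\in[a,r]$, and condition \eqref{eq:msd_gains} becomes $\phi(b)-\phi(t)\ge 0$ for all $t\in[r,b]$ (since $\int_{t^+}^b(F_Y-F_X)=\phi(b)-\phi(t^+)$). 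The ``only if'' direction is then immediate: $y_i\le r$ forces $y_i\in[a,r]$ and $x_i>r$ forces $x_i\in(r,b]$, so evaluating the reformulated inequalities at $t=y_i$ and $t=x_i$ yields \eqref{eq:th_losses} and \eqref{eq:th_gains}. The substance is the ``if'' direction.

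For that I would exploit the structure of $F^2$ recorded in \eqref{eq:icdf}: $F^2_X(t)=\sum_i p_i\max\{t-x_i,0\}$ and $F^2_Y(t)=\sum_i p_i\max\{t-y_i,0\}$ are continuous, convex, piecewise-linear functions whose kinks occur only at the distinct values among $\{x_i\}$ and $\{y_i\}$ respectively. Hence on any interval whose interior contains no $y_i$, the function $F^2_Y$ is affine, so $\phi$ is concave there, being the difference of an affine function and a convex one; partitioning $[a,r]$ at the points $\{y_i:y_i\le r\}$ --- which includes $r$ itself, since $r=y_j$ for some $j$ by assumption --- $\phi$ lies above the smaller of its two endpoint values on each cell, whence $\min_{[a,r]}\phi=\min\bigl(\{\phi(a)\}\cup\{\phi(y_i):y_i\le r\}\bigr)$; as $\phi(a)=0$, this minimum is nonnegative exactly when \eqref{eq:th_losses} holds. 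Symmetrically, on any interval whose interior contains no $x_i$ the function $F^2_X$ is affine, so $\phi$ is convex there; partitioning $[r,b]$ at $\{x_i:r<x_i<b\}$ gives $\max_{[r,b]}\phi=\max\bigl(\{\phi(r),\phi(b)\}\cup\{\phi(x_i):r<x_i<b\}\bigr)$, so the gains requirement $\phi(b)-\phi(t)\ge 0$ on $[r,b]$ is equivalent to $\phi(r)\le\phi(b)$ together with $\phi(x_i)\le\phi(b)$ for every $x_i\in(r,b)$ --- and this last family is precisely \eqref{eq:th_gains} (the index with $x_i=b$, if any, giving the vacuous $\phi(b)\le\phi(b)$).

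The step I expect to be the crux is the single leftover inequality $\phi(r)\le\phi(b)$, i.e.\ the gains condition at its left endpoint $t^+=r$, namely $F^2_Y(b)-F^2_Y(r)\ge F^2_X(b)-F^2_X(r)$: this is not literally one of the inequalities indexed by $x_i>r$. I would handle it either by reading $r$ in as an admissible test point of \eqref{eq:th_gains} --- which is natural because $r=y_j\in\{y_i\}$, so the quantifier should be understood as ranging over the indices $i$ with $x_i>r$ together with the index $j$ realizing $y_j=r$ --- or by trying to recover $\phi(r)\le\phi(b)$ from \eqref{eq:th_losses}--\eqref{eq:th_gains} directly, using the loss inequality $\phi(r)\ge 0$ at $y_j=r$ together with the behaviour of $\phi$ on the first cell $[r,\hat x]$ to the right of $r$, where $\hat x=\min\{x_i:x_i>r\}$ and $F_X$ is constant. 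Once $\phi(r)\le\phi(b)$ is secured, the two displayed families of inequalities are equivalent to \eqref{eq:msd_gains}--\eqref{eq:msd_losses}, and Proposition~\ref{prop:msd} closes the argument.
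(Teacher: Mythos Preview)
Your route differs from the paper's: the paper argues each half by contrapositive, performing a three-case analysis on the location of a putative violation $t^*$, whereas you partition $[a,r]$ at the $y_i$'s and $[r,b]$ at the $x_i$'s and use the piecewise concavity/convexity of $\phi=F^2_Y-F^2_X$ to push the extrema to the cell endpoints. Your treatment of the loss side is clean and correct, and the gains reduction to $\phi(x_i)\le\phi(b)$ at the interior kinks is also fine.

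The crux you isolate --- the leftover inequality $\phi(r)\le\phi(b)$ --- is a genuine gap, and neither of your two proposed fixes will close it. Approach~(a) modifies the theorem rather than proving it: condition~\eqref{eq:th_gains} is quantified over indices $i$ with $x_i>r$, and $r$ is a $y$-value, not an $x$-value. Approach~(b) cannot succeed because $\phi(r)\le\phi(b)$ simply does not follow from \eqref{eq:th_gains}--\eqref{eq:th_losses}. Take $n=3$, $p=(\tfrac13,\tfrac13,\tfrac13)$, $Y=(-6,0,3)$ with $r=y_2=0$, and $X=(-4,-2,0)$. No $x_i$ exceeds $r$, so \eqref{eq:th_gains} is vacuous; and $F^2_Y(-6)=F^2_X(-6)=0$, $F^2_Y(0)=F^2_X(0)=2$, so \eqref{eq:th_losses} holds. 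Yet $\int_0^b[F_Y-F_X]\,dt=\int_0^3(\tfrac23-1)\,dt=-1<0$, so \eqref{eq:msd_gains} fails at $t^+=r$ and $X\not\succeq Y$. The same soft spot sits in the paper's own argument: its first gains case asserts $\partial\Psi/\partial t^+=-F_X(t^+)=0$ on $(r,\min\{x_i:x_i>r\})$, which is false whenever some $x_i\le r$, and the entire case split is vacuous when $\{x_i:x_i>r\}=\varnothing$. The clean repair is to append the single inequality $F^2_Y(b)-F^2_Y(r)\ge F^2_X(b)-F^2_X(r)$ to the list in \eqref{eq:th_gains}; with that amendment your convexity argument on $[r,b]$ goes through unchanged.
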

\begin{proof}
See Appendix \ref{appx:proofs}.
\end{proof}

Note that condition \eqref{eq:th_gains} of Theorem \ref{th:discrete_MSD} compares the integrated CDFs of portfolios $X$ and $Y$ in the domain of gains, while its condition \eqref{eq:th_losses} examines them in the domain of losses. Essentially, the integrated CDFs of portfolios $X$ and $Y$ in gains are non-increasing concave piece-wise linear functions, whereas in losses the resulting integrated CDFs are non-decreasing convex piece-wise linear functions (see, e.g., \citealt{LiesioEtAl2020, Xu2024}). Therefore, a key implication of Theorem \ref{th:discrete_MSD} is that, for portfolio $X$ to dominate benchmark $Y$ by MSD, it suffices to check these integrated CDFs only at return levels that correspond to gains of the dominant portfolio $X$ (cf. condition \eqref{eq:th_gains}) and losses of the dominated portfolio $Y$ (cf. condition \eqref{eq:th_losses}). Since the state-space is discrete, these integrated CDFs are evaluated across a finite number of such state-specific portfolio returns.

\begin{theorem} \label{th:discrete_MWSD}
Consider two portfolios $X$ and $Y$. Then $X \succeq_{d^-}^{d^+} Y$ if and only if
\begin{eqnarray}
& X \succeq Y \nonumber \\
& \text{ and } \nonumber \\
& \tilde{F}_X(y_i) \leq \max\big\{F_Y(y_{i-1}), ~d^-\big\}  ~\forall~ i \in N  ~s.t.~ F_Y(y_{i-1}) < 1-d^+, \label{eq:th_pwfs}   
\end{eqnarray}
where $\tilde{F}_X(t) = \sum_{i|x_i<t}p_i$ and $F_Y(y_0) = 0$.
\end{theorem}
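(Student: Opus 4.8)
The plan is to derive Theorem~\ref{th:discrete_MWSD} by discretizing Proposition~\ref{prop:mwsd}. That proposition already decomposes MWSD into the conjunction of $X\succeq Y$ and the pointwise comparison $F_X(t)\le F_Y(t)$ for all $t\in[t_d^-,t_d^+)$, and Theorem~\ref{th:discrete_MSD} has just restated $X\succeq Y$ by means of \eqref{eq:th_gains}--\eqref{eq:th_losses}. So the crux of the proof is to show that, under the standing discrete assumptions, the requirement $F_X(t)\le F_Y(t)$ for all $t\in[t_d^-,t_d^+)$ is equivalent to condition~\eqref{eq:th_pwfs}.

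First I would reduce the pointwise comparison to finitely many inequalities using the piecewise-constant structure of the CDFs. Since $F_Y$ is right-continuous and non-decreasing with jumps only at $y_1\le\dots\le y_n$, it is constant and equal to $F_Y(y_{i-1})$ on each nonempty half-open interval $[y_{i-1},y_i)$, with $F_Y(y_0)=0$ covering $[a,y_1)$; indices with $y_{i-1}=y_i$ give empty intervals and a constraint in \eqref{eq:th_pwfs} that is already implied by the one for the preceding index, so they may be dropped. On a nonempty interval $[y_{i-1},y_i)$, the requirement ``$F_X(t)\le F_Y(t)$ throughout'' is equivalent to $\sup_{t\in[y_{i-1},y_i)}F_X(t)\le F_Y(y_{i-1})$, and monotonicity plus right-continuity of $F_X$ identify this supremum with the left limit $F_X(y_i^{-})=\sum_{j:\,x_j<y_i}p_j=\tilde{F}_X(y_i)$. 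Thus, if the comparison were demanded on all of $[a,b)$, it would amount exactly to $\tilde{F}_X(y_i)\le F_Y(y_{i-1})$ for every $i\in N$.

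Next I would localize this to the window $[t_d^-,t_d^+)$ and show that the localization is encoded precisely by the right-hand side $\max\{F_Y(y_{i-1}),d^-\}$ together with the index restriction $F_Y(y_{i-1})<1-d^+$. On the lower side I would use the dual observation that $F_X(t)\le d^-$ and $F_Y(t)\le d^-$ for every $t<t_d^-$: an interval sitting entirely at or below $t_d^-$ then contributes only the automatically-true bound $\tilde{F}_X(y_i)\le d^-$, while on the interval straddling $t_d^-$ one splits $\sup_{t\in[y_{i-1},y_i)}F_X$ into its part below $t_d^-$ (at most $d^-$) and its part on $[t_d^-,y_i)$ (at most $F_Y(y_{i-1})$ when the comparison holds), which together give the bound $\max\{F_Y(y_{i-1}),d^-\}$; running this backwards requires separating the case where $t_d^-$ is a jump of $F_Y$ (so $F_Y(y_{i-1})\ge d^-$ and the $\max$ collapses) from the case where it is a jump of $F_X$ interior to an $F_Y$-flat (so $F_X(t_d^-)>d^-\ge F_Y(t_d^-)$, and both the comparison and \eqref{eq:th_pwfs} fail), invoking $X\succeq Y$ to rule out degenerate configurations such as mass of $X$ strictly below $y_1$. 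On the upper side I would use that $\{t\ge r:F_X(t)\ge 1-d^+,\, F_Y(t)\ge 1-d^+\}$ is upward closed, so every $t$ with $F_Y(t)<1-d^+$ lies in $[t_d^-,t_d^+)$, and conversely that on an interval meeting $[t_d^-,t_d^+)$ with $F_Y(y_{i-1})\ge 1-d^+$ one has $F_X(t)<1-d^+\le F_Y(t)$ for free.

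The step I expect to be the real obstacle is exactly this localization: matching the window $[t_d^-,t_d^+)$ --- whose endpoints are suprema/infima that need not be attained and that depend on \emph{both} $F_X$ and $F_Y$ and on the reference point $r$ (which forces $t_d^+\ge r$) --- against the purely index-based data in \eqref{eq:th_pwfs}. Carrying this out rigorously will require a careful case split on the position of $t_d^\pm$ relative to the jump sets of $F_X$ and $F_Y$, and on whether the relevant intervals fall in the loss region $y_{i-1}<r$ or the gain region $y_{i-1}\ge r$; the remaining ingredients --- the step-function reduction above and the appeal to Propositions~\ref{prop:msd}--\ref{prop:mwsd} and Theorem~\ref{th:discrete_MSD} --- are routine.
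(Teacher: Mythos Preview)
Your proposal is correct and rests on the same reduction the paper uses: by Proposition~\ref{prop:mwsd}, only the equivalence between the pointwise window condition~\eqref{eq:mwsd} and the discrete condition~\eqref{eq:th_pwfs} needs to be established, and the key observation in both arguments is that on each interval $[y_{i-1},y_i)$ the supremum of $F_X$ is exactly $\tilde{F}_X(y_i)$ while $F_Y$ is the constant $F_Y(y_{i-1})$. Where you differ is in organization. You plan a direct, interval-by-interval case analysis on the position of $[y_{i-1},y_i)$ relative to $t_d^\pm$; the paper instead proves the ``if'' direction by contrapositive --- it picks a single violation point $t^*\in[t_d^-,t_d^+)$ with $F_X(t^*)>F_Y(t^*)$, locates it in some $[y_{l-1},y_l)$, and reads off that~\eqref{eq:th_pwfs} fails at $i=l$ (the threshold facts $F_X(t^*)>d^-$ and $F_Y(t^*)<1-d^+$ drop out of $t^*$ lying in the window). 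The contrapositive is shorter and bypasses most of the endpoint casework you flag as the ``real obstacle''; your direct route is more structural and would make the role of the $\max\{\cdot,d^-\}$ term and the index filter $F_Y(y_{i-1})<1-d^+$ fully transparent. One remark: the paper does \emph{not} invoke $X\succeq Y$ when proving the equivalence of~\eqref{eq:mwsd} and~\eqref{eq:th_pwfs}, so your planned appeal to MSD for degenerate configurations near $y_1$ is probably unnecessary --- the bound $\tilde{F}_X(y_i)\le d^-$ for intervals at or below $t_d^-$ follows from the definition of $t_d^-$ alone.
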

\begin{proof}
See Appendix \ref{appx:proofs}.
\end{proof}

Theorem \ref{th:discrete_MWSD} formalizes the necessary and sufficient conditions for portfolio $X$ to dominate portfolio $Y$ according to the MWSD criterion. Specifically, besides the MSD conditions outlined in Theorem \ref{th:discrete_MSD}, condition \eqref{eq:th_pwfs} of Theorem \ref{th:discrete_MWSD} states that FSD between portfolios $X$ and $Y$ holds over certain states only. Recall that the benchmark's state-specific returns are sorted such that $y_1 \leq y_2 \leq \dots \leq y_{n-1} \leq y_n$. Furthermore, establishing FSD between $X$ and $Y$ requires that $\tilde{F}_X$ evaluated at each $y_i$, i.e., $\tilde{F}_X(y_i)$ remains below or equal to $F_Y$ evaluated at its preceding state-specific return $y_{i-1}$, i.e., $F_Y(y_{i-1})$. Since the state-space is discrete, both $t_d^-$ and $t_d^+$ (see \eqref{eq:tdl} and \eqref{eq:tdr}) would correspond to some state-specific portfolio returns. Thus, condition \eqref{eq:th_pwfs} technically ensures that FSD begins in a state of losses where the maximum of the CDFs of portfolios $X$ and $Y$ evaluated at $t_d^-$ is strictly greater than $d^-$. Subsequently, FSD concludes in a state of gains where the minimum of the CDFs of $X$ and $Y$ evaluated at $t_d^+$ is strictly below $1-d^+$.

Notably, MSD is always a necessary condition for MWSD to be established between portfolios $X$ and $Y$. Hence, MWSD implies MSD but in general not vice versa. Ultimately, these theoretical dominance foundations are essential for the development of portfolio optimization models in the next phase of this research.

\section{Portfolio Optimization Models Based on Markowitz Stochastic Dominance and Weighted Markowitz\\Stochastic Dominance} \label{sec4}

This section develops stochastic portfolio optimization models based on the MSD and MWSD conditions introduced in Section \ref{sec3}. Specifically, we first demonstrate that these dominance conditions can be implemented using linear constraints. We then show that MSD- and MWSD-constrained optimization models can be formulated as computationally tractable mixed-integer linear programming (MILP) problems.

We use real-valued random variables $X_1,\dots,X_m$ defined on the discrete state-space $S$ to represent the returns of $m$ outstanding base assets in the asset universe. A portfolio is typically constructed as a convex combination of these base assets. This is technically achieved by introducing a vector of asset weights $\lambda \in \mathbb{R}^m$, which proportionally allocates the initial investment across the $m$ base assets, subject to the no short sales constraint $\lambda \geq 0$. Consequently, the set of all feasible asset weights can be expressed as
\begin{equation*} \label{eq:set_of_weights}
\Lambda=\Bigg\{\lambda \in \mathbb{R}_+^m~\Bigg|~\sum_{j=1}^m \lambda_j=1\Bigg\}.
\end{equation*}

The return of a portfolio weighted by $\lambda$ is captured by random variable $X = \sum_{j=1}^m \lambda_j X_j$ whose state-specific return in state $i$ is given by $x_i=X(s_i)=\sum_{j=1}^m \lambda_j x_{j,i}$. The set of all feasible asset portfolios thus contains all possible return combinations of the base assets, i.e.,
\begin{equation*} \label{eq:set_of_portfolios}
\mathbb{X}=\Bigg\{\sum_{j=1}^m \lambda_j X_j~\Bigg|~\lambda \in \Lambda \Bigg\}.
\end{equation*}
Additionally, let $\mathcal{X} \subset \mathbb{R}^n$ denote the set of all vectors of state-specific returns generated by portfolios in set $\mathbb{X}$. Formally, this set is defined by
\begin{equation*} \label{eq:set_of_returns}
\mathcal{X} = \Big\{ (x_1, ...,x_n) = \big(X(s_1),\dots,X(s_n)\big) \in \mathbb{R}^n ~\big|~ X \in \mathbb{X} \Big\}. 
\end{equation*}

We now formally present MSD- and MWSD-based portfolio optimization models in turn. For an MSD investor with an inverse S-shaped utility function, selecting a benchmark dominating portfolio that maximizes the expected return corresponds to solving the stochastic optimization problem
\begin{equation} \label{eq:concpt_MSD_model}
\max_{X \in \mathbb{X}} \mathbb{E}[X] ~ s.t. ~ X \succeq Y.
\end{equation}

Next, for an MWSD investor whose risk preferences are characterized by an inverse S-shaped utility function and subjective decision weights generated via a pair of inverse S-shaped probability weighting functions (PWFs), identifying an optimal benchmark dominating portfolio amounts to solving the stochastic optimization problem
\begin{equation} \label{eq:concpt_MWSD_model}
\max_{X \in \mathbb{X}} \mathbb{E}[X] ~ s.t. ~ X \succeq_{d^-}^{d^+} Y.
\end{equation}

\subsection{Formulating Markowitz Stochastic Dominance Constraints}

For $X$ to dominate $Y$ by MSD, condition \eqref{eq:th_gains} of Theorem \ref{th:discrete_MSD} states that $F^2_Y(b)-F^2_Y(x_i) \geq F^2_X(b)-F^2_X(x_i)$ holds for all gains such that $x_i > r$. After rearrangement, this condition is rewritten as
\begin{eqnarray}
F^2_Y(b) &\geq& F^2_Y(x_i) + F^2_X(b) - F^2_X(x_i) \overset{\eqref{eq:icdf}}{=} \sum_{y_k \leq x_i} p_k \big(x_i-y_k\big) + \sum_{x_k \leq b} p_k \big(b-x_k\big) - \sum_{x_k \leq x_i} p_k \big(x_i-x_k\big) \nonumber \\
& & = \sum_{k=1}^n p_k \max\big\{x_i -y_k, ~0\big\} +\sum_{k=1}^n p_k \max\big\{b -x_k, ~0\big\} - \sum_{k=1}^n p_k \max\big\{x_i -x_k, ~0\big\} \label{eq:gains1}.
\end{eqnarray}

Clearly, for an arbitrary $x_i > r$, the value of the $k$th term in \eqref{eq:gains1} is jointly determined by the individual output returned by each of its max operators. These outcomes can be evaluated using mixed-integer linear programming (MILP) techniques. In particular, we first introduce non-negative continuous decision variables $\varphi_{i1}, \dots, \varphi_{in}$ and for each $\varphi_{ik}$, we establish the constraint $\varphi_{ik} \geq x_i - y_k$ such that $\varphi_{ik} > 0$ if $x_i > y_k$, otherwise $\varphi_{ik} = 0$, for $k \in N = \{1,\dots,n\}$. Next, we introduce non-negative continuous decision variables $\psi_{1}, \dots, \psi_{n}$. For each $\psi_k$, we formulate the constraint  $\psi_k \geq b - x_k$ such that $\psi_k$ is strictly greater than zero if $b > x_k$, otherwise $\psi_k$ takes a value of zero, for $k \in N$. Then, we construct non-negative decision variables $\theta_{i1},\dots, \theta_{in}$, along with binary decision variables $z_{i1},\dots, z_{in}$ to indicate which of the state-specific returns $x_1,\dots,x_n$ is greater or below $x_i$. For each $\theta_{ik}$, two constraints alternate to bind each state $k$ such that $\theta_{ik} > 0$ and $z_{ik} = 1$ if $x_i > x_k$, or $\theta_{ik} = 0$ and $z_{ik} = 0$ otherwise, for $k \in N$.

Putting all together, for any $x_i > r$, the value of $F^2_Y(x_i) + F^2_X(b) - F^2_X(x_i)$ is equal to the optimal function value of the MILP problem
\begin{eqnarray}
\min_{(\varphi_{i1}, \dots, \varphi_{in}) \in \mathbb{R}_+^n, ~\psi \in \mathbb{R}^n_+, ~ \atop (\theta_{i1},\dots,\theta_{in}) \in \mathbb{R}^{n}_+, ~(z_{i1},\dots,z_{in}) \in \{0,1\}^{n}} & & \sum_{k=1}^{n} p_k \varphi_{ik} + \sum_{k=1}^{n} p_k \psi_k - \sum_{k=1}^{n} p_k \theta_{ik} \label{eq:gains3} \\
s.t. & &\varphi_{ik} \geq x_i - y_k ~\forall~ k \in N  \label{eq:gains4} \\
& & \psi_k \geq b - x_k ~\forall~ k \in N  \label{eq:gains5} \\
& &\theta_{ik} \leq x_i - x_k +M(1-z_{ik}) ~\forall~ k \in N  \label{eq:gains6} \\
& &\theta_{ik} \leq Mz_{ik} ~\forall~ k \in N, \label{eq:gains7}
\end{eqnarray}
where $M$ is a sufficiently large positive constant.

Therefore, establishing condition \eqref{eq:th_gains} of Theorem \ref{th:discrete_MSD} is equivalent to determining if there exists a feasible solution to a system of linear constraints. This result is formally stated by the lemma below.

\begin{lemma} \label{lem:milp_gains}
Consider two portfolios $X$ and $Y$. Then condition $\eqref{eq:th_gains}$ of Theorem \ref{th:discrete_MSD} holds if and only if there exist $\varphi \in \mathbb{R}^{n \times n}_+$, $\psi \in \mathbb{R}^{n}_+$, $\theta \in \mathbb{R}^{n \times n}_+$, and $z \in \{0, 1\}^{n \times n}$ that satisfy the constraints
\begin{eqnarray}
& &\varphi_{ik} \geq x_i - y_k ~\forall~i,k \in N \label{eq:milp_i} \\
& & \psi_k \geq b - x_k ~\forall~ k \in N \label{eq:milp_ii} \\ 
& &\theta_{ik} \leq x_i - x_k + M(1-z_{ik}) ~\forall~i, k \in N \label{eq:milp_iii} \\ 
& &\theta_{ik} \leq Mz_{ik} ~\forall~i,k \in N \label{eq:milp_iv} \\
& &\sum_{k=1}^{n} p_k \varphi_{ik} + \sum_{k=1}^{n} p_k \psi_k - \sum_{k=1}^{n} p_k \theta_{ik} \leq F^2_Y(b) ~\forall~ i \in N  ~s.t.~ x_i > r \label{eq:milp_v},
\end{eqnarray}
where M is a sufficiently large positive constant.
\end{lemma}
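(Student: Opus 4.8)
The plan is to reduce the equivalence to the value identity asserted just above the lemma: for each $i$ with $x_i>r$, the optimal objective value of the auxiliary MILP \eqref{eq:gains3}--\eqref{eq:gains7} equals $F^2_Y(x_i)+F^2_X(b)-F^2_X(x_i)$. I would first make this identity rigorous, and then assemble per-$i$ optimal solutions into a single feasible point of the system \eqref{eq:milp_i}--\eqref{eq:milp_v}. The assembly is immediate once one observes that the only variable block shared across different indices $i$ is $\psi$, whose constraint \eqref{eq:milp_ii} and optimal setting $\psi_k=\max\{b-x_k,0\}$ do not depend on $i$, so the $i$-indexed subproblems decouple.

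To establish the value identity, fix $i$ with $x_i>r$. Since $p_k\ge 0$, minimizing $\sum_k p_k\varphi_{ik}$ subject to $\varphi_{ik}\ge 0$ and \eqref{eq:gains4} forces $\varphi_{ik}=\max\{x_i-y_k,0\}$ at an optimum, and the same argument on \eqref{eq:gains5} gives $\psi_k=\max\{b-x_k,0\}$. Because $\theta$ enters the objective with a negative sign, each $\theta_{ik}$ is driven as high as \eqref{eq:gains6}--\eqref{eq:gains7} allow: with $z_{ik}=1$ these read $\theta_{ik}\le x_i-x_k$ and $\theta_{ik}\le M$, which together with $\theta_{ik}\ge 0$ and $M$ large is feasible exactly when $x_i\ge x_k$ and then permits $\theta_{ik}=x_i-x_k$; with $z_{ik}=0$ they force $\theta_{ik}=0$; hence the optimal choice is $\theta_{ik}=\max\{x_i-x_k,0\}$, the tie $x_i=x_k$ giving $0$ in either case. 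Plugging these values into \eqref{eq:gains3} and applying \eqref{eq:icdf} yields $F^2_Y(x_i)+F^2_X(b)-F^2_X(x_i)$, as in \eqref{eq:gains1}. I would also note briefly that the problem is bounded (one has $0\le\theta_{ik}\le M$, and the objective increases in the nonnegative $\varphi,\psi$) and that the minimum is attained, since after fixing $z$ one is left with a linear program with nonempty feasible set on which the objective is bounded below.

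Given the value identity, the two implications are short. For the forward direction, assume \eqref{eq:th_gains} and set, for all $i,k\in N$, $\varphi_{ik}=\max\{x_i-y_k,0\}$, $\theta_{ik}=\max\{x_i-x_k,0\}$, $z_{ik}=1$ if $x_i>x_k$ and $z_{ik}=0$ otherwise, and $\psi_k=\max\{b-x_k,0\}$; these obviously satisfy \eqref{eq:milp_i}--\eqref{eq:milp_iv}, and for each $i$ with $x_i>r$ the left-hand side of \eqref{eq:milp_v} equals $F^2_Y(x_i)+F^2_X(b)-F^2_X(x_i)\le F^2_Y(b)$ by \eqref{eq:th_gains}. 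For the converse, take any $(\varphi,\psi,\theta,z)$ satisfying \eqref{eq:milp_i}--\eqref{eq:milp_v} and any $i$ with $x_i>r$; restricting to the $i$-th block gives a point $(\varphi_{i\cdot},\psi,\theta_{i\cdot},z_{i\cdot})$ feasible for \eqref{eq:gains4}--\eqref{eq:gains7}, so its objective value is at least the optimum $F^2_Y(x_i)+F^2_X(b)-F^2_X(x_i)$, while \eqref{eq:milp_v} bounds that same objective by $F^2_Y(b)$; chaining the two inequalities and rearranging gives \eqref{eq:th_gains} for that $i$, and ranging over all such $i$ finishes the proof.

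The main and essentially only non-routine obstacle is handling the $\theta_{ik}$ variables: because the objective \emph{rewards} large $\theta_{ik}$, one must verify carefully that the big-$M$ pair \eqref{eq:gains6}--\eqref{eq:gains7} reproduces $\max\{x_i-x_k,0\}$ exactly -- neither permitting an overshoot beyond $x_i-x_k$ nor, via an unfavourable value of the binary $z_{ik}$, settling below it -- and that ``$M$ sufficiently large'' can be made explicit (for instance $M\ge b-a$ works, since all returns lie in $[a,b]$). Everything else is bookkeeping, together with the decoupling remark that lets a single array of variables serve all the constraints at once.
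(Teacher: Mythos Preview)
Your proposal is correct and follows essentially the same route as the paper: both directions hinge on the value identity that the optimal objective of \eqref{eq:gains3}--\eqref{eq:gains7} equals $F^2_Y(x_i)+F^2_X(b)-F^2_X(x_i)$, with the ``if'' part sandwiching this value between any feasible objective and the bound $F^2_Y(b)$ from \eqref{eq:milp_v}, and the ``only if'' part taking the per-$i$ optimal solutions to verify \eqref{eq:milp_v}. Your treatment is somewhat more careful than the paper's own proof---you explicitly argue the value identity, note the decoupling through the shared but $i$-independent block $\psi$, and make ``$M$ sufficiently large'' concrete via $M\ge b-a$---but these are refinements of the same argument rather than a different one.
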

\begin{proof}
See Appendix \ref{appx:proofs}.
\end{proof}

Next, for $X$ to dominate $Y$ by MSD, condition \eqref{eq:th_losses} of Theorem \ref{th:discrete_MSD} requires that $F^2_Y(y_i) \geq F^2_X(y_i)$ for all losses such that $y_i \leq r$. In what follows, for an arbitrary $y_i \leq r$, it can be shown that
\begin{equation}
F^2_Y(y_i) \geq F^2_X(y_i) \overset{\eqref{eq:icdf}}{=} \sum_{x_k \leq y_i} p_k \big(y_i-x_k\big) =  \sum_{k=1}^n p_k \max\big\{y_i -x_k, ~0\big\}. \label{eq:losses1}
\end{equation}

Similarly, the value on the right-hand side of \eqref{eq:losses1} can be determined by formulating an LP problem augmented with non-negative continuous decision variables $\delta_{i1}, \dots, \delta_{in}$. Specifically, for each decision variable $\delta_{ik}$, we establish the constraint $\delta_{ik} \geq y_i - x_k$ such that $\delta_{ik} > 0$ if $y_i > x_k$, otherwise $\delta_{ik} = 0$, for $ k \in N$. Therefore, for any $y_i \leq r$, the value of $F_X^2(y_i)$ is equal to the optimal objective function value of the linear programming (LP) problem
\begin{equation}
\min_{(\delta_{i1}, \dots, \delta_{in}) \in \mathbb{R}_+^n}  \left\{\sum_{k=1}^{n} p_k \delta_{ik} ~ \Bigg| ~ \delta_{ik} \geq y_i - x_k ~\forall ~k \in N\right\}. \label{eq:losses2}
\end{equation}

Formally, establishing condition \eqref{eq:th_losses} of Theorem \ref{th:discrete_MSD} thus corresponds to identifying a feasible solution to a system of linear inequalities as presented in the following lemma.

\begin{lemma} \label{lem:milp_losses}
Consider two portfolios $X$ and $Y$. Then condition $\eqref{eq:th_losses}$ of Theorem \ref{th:discrete_MSD} holds if and only if there exists $\delta \in \mathbb{R}^{n^-\times n}_+$ that satisfies the constraints
\begin{eqnarray}
& &\delta_{ik} \geq y_i-x_k ~\forall~ i \in N^-, ~k \in N \label{eq:milp_vi} \\
& &\sum_{k=1}^{n} p_k \delta_{ik} \leq F_Y^2(y_i) ~\forall~ i  \in N^- \label{eq:milp_vii},
\end{eqnarray}
where $N^- = \{ i \in  N ~|~ y_i \leq r\}$ and $n^- = |N^-|$.
\end{lemma}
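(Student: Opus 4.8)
The plan is to exploit the LP characterization of $F^2_X(y_i)$ already recorded in \eqref{eq:losses1}--\eqref{eq:losses2}, namely that $F^2_X(y_i) = \sum_{k=1}^n p_k \max\{y_i - x_k, 0\}$ and that this equals the optimal value of the linear program $\min \{\sum_k p_k \delta_{ik} \mid \delta_{ik} \geq y_i - x_k ~\forall k \in N,~ \delta_{ik} \geq 0\}$. The lemma then follows by the standard observation that an inequality of the form ``optimal value of a minimization LP $\leq c$'' is equivalent to ``there exists a feasible point with objective $\leq c$.'' I would prove the two directions separately.

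For the ``only if'' direction, assume \eqref{eq:th_losses} holds. For each $i \in N^-$ set $\delta_{ik} := \max\{y_i - x_k, 0\}$ for all $k \in N$. This choice is non-negative and satisfies $\delta_{ik} \geq y_i - x_k$, so \eqref{eq:milp_vi} holds; moreover $\sum_{k=1}^n p_k \delta_{ik} = F^2_X(y_i) \leq F^2_Y(y_i)$ by \eqref{eq:th_losses} and \eqref{eq:losses1}, giving \eqref{eq:milp_vii}. For the ``if'' direction, suppose $\delta \in \mathbb{R}^{n^- \times n}_+$ satisfies \eqref{eq:milp_vi} and \eqref{eq:milp_vii}. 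Since $\delta_{ik} \geq 0$ and $\delta_{ik} \geq y_i - x_k$, we have $\delta_{ik} \geq \max\{y_i - x_k, 0\}$ for every $k$; multiplying by $p_k \geq 0$ and summing yields $\sum_{k=1}^n p_k \delta_{ik} \geq \sum_{k=1}^n p_k \max\{y_i - x_k, 0\} = F^2_X(y_i)$. Chaining this with \eqref{eq:milp_vii} gives $F^2_X(y_i) \leq F^2_Y(y_i)$ for all $i \in N^-$, which is precisely \eqref{eq:th_losses}.

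There is no real obstacle here; the only point requiring care is the elementary verification that the minimizer of the LP \eqref{eq:losses2} is attained coordinatewise at $\delta_{ik} = \max\{y_i - x_k, 0\}$, which is immediate because the program decouples across $k$ and each term $p_k \delta_{ik}$ with $p_k \geq 0$ is minimized by taking $\delta_{ik}$ as small as its two lower bounds allow. I would also note explicitly that this lemma is the loss-domain analogue of Lemma \ref{lem:milp_gains}, but simpler: because $F^2_X(y_i)$ in the loss domain involves a single family of max-terms (no difference of max-terms as in \eqref{eq:gains1}), no binary variables or big-$M$ constraints are needed, and \eqref{eq:losses2} is a genuine LP rather than an MILP.
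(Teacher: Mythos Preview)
Your proof is correct and follows essentially the same approach as the paper's own proof. Both directions rely on the LP characterization \eqref{eq:losses2}: for the ``only if'' part you explicitly exhibit the minimizer $\delta_{ik}=\max\{y_i-x_k,0\}$ where the paper simply invokes ``the optimal solution $\delta^*$,'' and for the ``if'' part both arguments use that any feasible $\delta$ has objective value bounding $F^2_X(y_i)$ from above, then chain with \eqref{eq:milp_vii}.
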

\begin{proof}
See Appendix \ref{appx:proofs}.
\end{proof}

Lemmas \ref{lem:milp_gains} and \ref{lem:milp_losses} collectively establish the equivalence between the required dominance conditions by MSD and a linear system of constraints in the state-specific returns of portfolios $X$ and $Y$. However, unlike each known $y_i$, it is not possible to presume which state-specific return of portfolio $X$ is greater than the reference point $r$. In addition to this, constraint \eqref{eq:milp_v} in Lemma \ref{lem:milp_gains} applies only to gains where $x_i > r$. Hence, each $x_i$ here can be considered as a decision variable. For this purpose, we introduce binary decision variables $\xi_1, \dots, \xi_n$ and formulate the constraint $x_i - r \leq M \xi_i$ such that $\xi_i = 1$ if $x_i > r$, or $\xi_i = 0$ otherwise, for $i \in N$. Subsequently, this also necessitates a corresponding modification to the left-hand side of constraint \eqref{eq:milp_v} by introducing an extra big $M$-term, as shown below
\begin{equation*}
\sum_{k=1}^{n} p_k \varphi_{ik} + \sum_{k=1}^{n} p_k \psi_k - \sum_{k=1}^{n} p_k \theta_{ik} -M(1-\xi_i),
\end{equation*}
where $M$ is a sufficiently large positive constant. 

With this modification, the theorem below formalizes the MSD conditions in the form of a system of linear constraints.

\begin{theorem} \label{th:milp_msd}
Consider a portfolio $X \in \mathbb{X}$ and a given benchmark $Y$. Then
\begin{enumerate}
    \item[$(i)$] If portfolio $X \in \mathbb{X}$ dominates the benchmark such that $X \succeq Y$, then there exist $\varphi \in \mathbb{R}^{n \times n}_+$, $\psi \in \mathbb{R}^{n}_+$, $\theta \in \mathbb{R}^{n \times n}_+$, $z \in \{0, 1\}^{n \times n}$, $\xi \in \{0, 1\}^{n}$, and $\delta \in \mathbb{R}^{n^- \times n}_+$ that satisfy the system of linear constraints $\eqref{eq:milp_i}-\eqref{eq:milp_iv}$, $\eqref{eq:milp_vi}-\eqref{eq:milp_vii}$, and 
\begin{eqnarray}
& & x_i - r - M\xi_i \leq 0 ~\forall~ i \in N \label{eq:milp_viii} \\
& & \sum_{k=1}^{n} p_k \varphi_{ik} + \sum_{k=1}^{n} p_k \psi_k - \sum_{k=1}^{n} p_k \theta_{ik} -M(1-\xi_i) \leq F^2_Y(b) ~\forall~ i \in N \label{eq:milp_ix},
\end{eqnarray}
where M is a sufficiently large positive constant.
    \item[$(ii)$] If there exist $x = (x_1, \dots, x_n) \in \mathcal{X}$, $\varphi \in \mathbb{R}^{n \times n}_+$, $\psi \in \mathbb{R}^{n}_+$, $\theta \in \mathbb{R}^{n \times n}_+$, $z \in \{0, 1\}^{n \times n}$, $\xi \in \{0, 1\}^{n}$, and $\delta \in \mathbb{R}^{n^- \times n}_+$ that satisfy the system of linear constraints $\eqref{eq:milp_i} - \eqref{eq:milp_iv}$ and $\eqref{eq:milp_vi} - \eqref{eq:milp_ix}$, then portfolio $X \in \mathbb{X}$ is a benchmark dominating portfolio such that $X \succeq Y$.
\end{enumerate}
\end{theorem}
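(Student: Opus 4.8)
The plan is to obtain Theorem \ref{th:milp_msd} by combining Theorem \ref{th:discrete_MSD} with Lemmas \ref{lem:milp_gains} and \ref{lem:milp_losses}. The only work beyond those results is accommodating the fact that, since $x$ is now a decision vector, one cannot know in advance which indices are gain states of $X$; the ``for all $i$ with $x_i>r$'' quantifier of \eqref{eq:milp_v} must therefore be encoded through the binaries $\xi_i$ and the big-$M$ padding in \eqref{eq:milp_ix}. I would prove the two one-sided implications separately, noting that together they amount to an equivalence as soon as $M\ge b-a$ (which is why the statement is split rather than written as an ``if and only if'').

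\emph{Direction $(i)$.} Starting from $X\succeq Y$, Theorem \ref{th:discrete_MSD} gives the integrated-CDF conditions \eqref{eq:th_gains} and \eqref{eq:th_losses}. From \eqref{eq:th_losses} and Lemma \ref{lem:milp_losses} I obtain $\delta\in\mathbb{R}^{n^-\times n}_+$ satisfying \eqref{eq:milp_vi}--\eqref{eq:milp_vii}. For the gain part I would not invoke Lemma \ref{lem:milp_gains} as a black box but instead pick the explicit minimal feasible multipliers $\varphi_{ik}=\max\{x_i-y_k,0\}$, $\psi_k=\max\{b-x_k,0\}$, $\theta_{ik}=\max\{x_i-x_k,0\}$, and $z_{ik}=1$ exactly when $x_i>x_k$; by \eqref{eq:icdf} these make the three sums in \eqref{eq:milp_v} equal to $F^2_Y(x_i)$, $F^2_X(b)$ and $F^2_X(x_i)$, so \eqref{eq:milp_i}--\eqref{eq:milp_iv} hold trivially and \eqref{eq:milp_v} becomes the assumed condition \eqref{eq:th_gains}. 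It remains to set $\xi_i=1$ if $x_i>r$ and $\xi_i=0$ otherwise. Then \eqref{eq:milp_viii} holds (either $x_i-r-M\le0$ since $M\ge b-a\ge x_i-r$, or $x_i-r\le0$); and in \eqref{eq:milp_ix}, when $\xi_i=1$ the padding vanishes and the inequality collapses to \eqref{eq:milp_v}, while when $\xi_i=0$ one has $x_i\le r\le b$, so $0\le F^2_Y(x_i)\le F^2_Y(b)$ and $0\le F^2_X(b)-F^2_X(x_i)=\int_{x_i}^b F_X(\tau)\,d\tau\le b-a$, whence the left side of \eqref{eq:milp_ix} is at most $F^2_Y(b)+(b-a)-M\le F^2_Y(b)$ once $M\ge b-a$. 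This exhibits a feasible point, proving $(i)$.

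\emph{Direction $(ii)$.} Given $x=(x_1,\dots,x_n)\in\mathcal{X}$ and multipliers satisfying \eqref{eq:milp_i}--\eqref{eq:milp_iv} and \eqref{eq:milp_vi}--\eqref{eq:milp_ix}, let $X\in\mathbb{X}$ be a portfolio with this state-return vector; I would verify the hypotheses of Theorem \ref{th:discrete_MSD} for this fixed $X$. Condition \eqref{eq:th_losses} follows at once from Lemma \ref{lem:milp_losses} applied to $X$, since \eqref{eq:milp_vi}--\eqref{eq:milp_vii} are precisely its system. For \eqref{eq:th_gains} I would read off the one-sided bounds the constraints force: $\varphi_{ik}\ge\max\{x_i-y_k,0\}$ (from \eqref{eq:milp_i} and $\varphi_{ik}\ge0$), $\psi_k\ge\max\{b-x_k,0\}$ (from \eqref{eq:milp_ii}), and $\theta_{ik}\le\max\{x_i-x_k,0\}$ (from \eqref{eq:milp_iii}--\eqref{eq:milp_iv} by a case split on $z_{ik}$ together with $\theta_{ik}\ge0$). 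Summing against $p$ and using \eqref{eq:icdf} yields $\sum_k p_k\varphi_{ik}+\sum_k p_k\psi_k-\sum_k p_k\theta_{ik}\ge F^2_Y(x_i)+F^2_X(b)-F^2_X(x_i)$. For any $i$ with $x_i>r$, \eqref{eq:milp_viii} forces $\xi_i=1$ (a strictly positive left side cannot be $\le0$ with $\xi_i=0$), so \eqref{eq:milp_ix} reduces to $\sum_k p_k\varphi_{ik}+\sum_k p_k\psi_k-\sum_k p_k\theta_{ik}\le F^2_Y(b)$; chaining the two inequalities gives $F^2_Y(b)-F^2_Y(x_i)\ge F^2_X(b)-F^2_X(x_i)$, i.e.\ \eqref{eq:th_gains}. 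Theorem \ref{th:discrete_MSD} then yields $X\succeq Y$, with $X\in\mathbb{X}$ by construction.

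The step I expect to be delicate — essentially the only non-mechanical one — is the big-$M$ bookkeeping around the $\xi_i$. In direction $(i)$ one must ensure the padding $-M(1-\xi_i)$ in \eqref{eq:milp_ix} never binds at loss states, which is exactly why the explicit minimal multipliers above are used: an arbitrary feasible $\varphi$ could be unbounded on loss indices, whereas the minimal choice is controlled a priori by $x_i\in[a,b]$ and $\sum_i p_i=1$, so a single $M$ works uniformly. In direction $(ii)$ one must use the correct logical reading of \eqref{eq:milp_viii}, namely $x_i>r\Rightarrow\xi_i=1$, so that \eqref{eq:milp_ix} is active on exactly the gain states of $X$. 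Everything else is a routine translation through Theorem \ref{th:discrete_MSD} and Lemmas \ref{lem:milp_gains}--\ref{lem:milp_losses}.
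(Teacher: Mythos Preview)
Your proof is correct and follows the same strategy as the paper's --- combine Theorem~\ref{th:discrete_MSD} with Lemmas~\ref{lem:milp_gains} and~\ref{lem:milp_losses}, then encode the unknown gain/loss split through the $\xi_i$ binaries and a big-$M$ term. If anything, your version is more careful on the point you flag as delicate: the paper invokes Lemma~\ref{lem:milp_gains} as a black box in direction $(i)$ and simply asserts that \eqref{eq:milp_ix} ``becomes redundant'' on loss states without bounding $\varphi_{i\cdot}$ there, and in direction $(ii)$ it says only ``using the same reasoning obtained in $(i)$''; your explicit choice of minimal multipliers with the uniform bound $M\ge b-a$, and your one-sided inequalities $\varphi_{ik}\ge\max\{x_i-y_k,0\}$, $\psi_k\ge\max\{b-x_k,0\}$, $\theta_{ik}\le\max\{x_i-x_k,0\}$, make both steps airtight.
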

\begin{proof}
See Appendix \ref{appx:proofs}.
\end{proof}

Technically, there can be multiple feasible solutions that explicitly satisfy the set of linear constraints above in Theorem \ref{th:milp_msd}. In fact, each of these solutions can be intuitively interpreted as a portfolio that dominates the benchmark according to the MSD criterion. More specifically, each of these benchmark dominating portfolios $X$ would be preferred to $Y$ by any MSD investor with an inverse S-shaped utility function. Furthermore, the most preferred portfolio can, for instance, be identified as the one that maximizes the expected portfolio return. Consequently, by evaluating the objective function with $\mathbb{E}[X]$, we obtain the following MILP problem
\begin{eqnarray}
\textbf{M1}: \max_{\lambda, ~\varphi, ~\psi, ~\theta \atop z, ~\xi, ~\delta} & & \sum_{i=1}^n p_i \sum_{j=1}^m \lambda_{j}x_{ji} \nonumber \\
s.t.
& & \sum_{j=1}^{m} \lambda_jx_{ji}-\varphi_{ik} \leq y_k ~\forall~ i,k \in N \label{eqn:MSD_gains_i} \\
& & \sum_{j=1}^{m} \lambda_jx_{jk} + \psi_k \geq b ~\forall~ k \in N  \label{eqn:MSD_gains_ii} \\
& & \sum_{j=1}^{m}\lambda_jx_{ji}-\sum_{j=1}^{m}\lambda_jx_{jk}+M(1-z_{ik})-\theta_{ik} \geq 0 ~\forall~ i, k \in N    \label{eqn:MSD_gains_iii} \\
& & Mz_{ik} -\theta_{ik} \geq 0 ~\forall~ i, k \in N   \label{eqn:MSD_gains_iv} \\
& & \sum_{j=1}^{m} \lambda_jx_{ji} - r - M\xi_i \leq 0 ~\forall~ i \in N  \label{eqn:MSD_gains_v} \\
& &\sum_{k=1}^{n} p_k \varphi_{ik} + \sum_{k=1}^{n} p_k \psi_k - \sum_{k=1}^{n} p_k \theta_{ik} -M(1-\xi_i) \leq F^2_Y(b) ~\forall~ i \in N   \label{eqn:MSD_gains_vi} \\
& & \sum_{j=1}^{m} \lambda_jx_{jk} + \delta_{ik} \geq y_i  ~\forall~ i \in N^-, ~ k \in N \label{eqn:MSD_losses_i} \\
& &\sum_{k=1}^{n} p_k \delta_{ik} \leq F_Y^2(y_i) ~\forall~ i  \in N^-   \label{eqn:MSD_losses_ii} \\
& & \lambda \in \Lambda \label{eqn:MSD_devar_i} \\
& & \varphi_{ik}, \theta_{ik} \geq 0 ~\forall~ i,k \in N, \quad \psi_{k} \geq 0 ~\forall~ k \in N, \quad \delta_{ik} \geq 0 ~\forall~ i \in N^-, ~ k \in N \label{eqn:MSD_devar_ii} \\
& & z_{ik} \in \{0,1\} ~\forall~ i, k \in N, \quad \xi_i \in \{0,1\} ~\forall~ i \in N \label{eqn:MSD_devar_iii},
\end{eqnarray}
where $M$ is a sufficiently large positive constant and $N^- = \{ i \in  N ~|~ y_i \leq r\}$.

\begin{corollary} \label{cor:milp_msd}
$X^* = \sum_{j=1}^m \lambda_j^*X_j$ is an optimal solution to the stochastic optimization problem \eqref{eq:concpt_MSD_model} if and only if the MILP problem \textbf{M1} has an optimal solution $(\lambda^*, \varphi^{*}, \psi^{*}, \theta^{*}, z^{*}, \xi^{*}, \delta^{*})$.
\end{corollary}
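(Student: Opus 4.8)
The plan is to deduce the corollary directly from Theorem \ref{th:milp_msd}, by observing that \textbf{M1} is nothing other than the problem ``maximize $\mathbb{E}[X]$ over the feasible set characterized in that theorem'', rewritten in the decision variable $\lambda$. So the argument is a routine ``optimization over equivalent feasible sets with matching objective'' argument, and the whole modeling content is already in hand.

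First I would record two elementary facts. (a) For any $\lambda\in\Lambda$ the induced state-specific return vector $x=(x_1,\dots,x_n)$ with $x_i=\sum_{j=1}^m\lambda_j x_{ji}$ lies in $\mathcal{X}$, every element of $\mathcal{X}$ arises this way, and the objective of \textbf{M1}, $\sum_{i=1}^n p_i\sum_{j=1}^m\lambda_j x_{ji}$, equals $\mathbb{E}[X]$ for $X=\sum_{j=1}^m\lambda_j X_j$. (b) Constraints \eqref{eqn:MSD_gains_i}--\eqref{eqn:MSD_losses_ii} of \textbf{M1} are exactly the linear system \eqref{eq:milp_i}--\eqref{eq:milp_iv}, \eqref{eq:milp_vi}--\eqref{eq:milp_ix} of Theorem \ref{th:milp_msd} after the substitution $x_i=\sum_j\lambda_j x_{ji}$, and the auxiliary variables $\varphi,\psi,\theta,z,\xi,\delta$ do not enter the objective. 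Hence the feasible set of \textbf{M1}, projected onto its $\lambda$-coordinates, coincides with $\{\lambda\in\Lambda : \sum_j\lambda_j X_j\succeq Y\}$, and the objective of \textbf{M1} depends only on $\lambda$.

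Then I would prove both implications by contradiction, each invoking one half of Theorem \ref{th:milp_msd}. For the ``only if'' direction, suppose $X^*=\sum_j\lambda_j^* X_j$ solves \eqref{eq:concpt_MSD_model}; since $X^*\in\mathbb{X}$ and $X^*\succeq Y$, Theorem \ref{th:milp_msd}$(i)$ supplies $\varphi^*,\psi^*,\theta^*,z^*,\xi^*,\delta^*$ making $(\lambda^*,\varphi^*,\psi^*,\theta^*,z^*,\xi^*,\delta^*)$ feasible for \textbf{M1} with value $\mathbb{E}[X^*]$; if some feasible point of \textbf{M1} had strictly larger value, Theorem \ref{th:milp_msd}$(ii)$ would turn its $\lambda$-part into a portfolio in $\mathbb{X}$ dominating $Y$ with strictly larger expected return, contradicting optimality of $X^*$. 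The ``if'' direction is symmetric: an optimal $(\lambda^*,\dots)$ for \textbf{M1} yields via Theorem \ref{th:milp_msd}$(ii)$ a feasible $X^*$ for \eqref{eq:concpt_MSD_model} of equal value, and any strictly better feasible $X'$ for \eqref{eq:concpt_MSD_model} would, via Theorem \ref{th:milp_msd}$(i)$, extend to a feasible point of \textbf{M1} with strictly larger value, again a contradiction.

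I do not expect a real obstacle. All the delicate work --- the big-$M$ linearizations, the correctness of the auxiliary variables, and the treatment of the a priori unknown sign of $x_i-r$ --- has already been discharged in Lemmas \ref{lem:milp_gains}--\ref{lem:milp_losses} and Theorem \ref{th:milp_msd}. The only point needing a little care is the bookkeeping in fact (b): that \textbf{M1} imposes nothing beyond the system of Theorem \ref{th:milp_msd} together with $\lambda\in\Lambda$, and in particular that extending constraint \eqref{eqn:MSD_gains_vi} to all $i\in N$ (rather than only those $i$ with $x_i>r$) is harmless because $\xi$ and constraint \eqref{eqn:MSD_gains_v} neutralize it on losses --- but this is precisely what Theorem \ref{th:milp_msd} already asserts, so it is cited rather than reproved.
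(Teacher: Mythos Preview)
Your proposal is correct and takes essentially the same approach as the paper, which simply states that the corollary follows directly from Theorem \ref{th:milp_msd}. You have merely spelled out the routine bookkeeping that the paper leaves implicit.
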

\begin{proof}
Corollary \ref{cor:milp_msd} follows directly from Theorem \ref{th:milp_msd}.
\end{proof}

\subsection{Formulating Weighted Markowitz Stochastic Dominance Constraints}

In order to incorporate MWSD constraints, Theorem \ref{th:discrete_MWSD} states in condition \eqref{eq:th_pwfs} that $\tilde{F}_X(y_i) \leq \max\big\{F_Y(y_{i-1}), ~d^-\big\}$ across all states in which $F_Y(y_{i-1}) < 1-d^+$. For an arbitrary $y_i$, the value of $\tilde{F}_X(y_i)$ can be modeled as the sum of state probabilities over states $k \in N$ where $x_k$ is strictly less than $y_i$. To achieve this, we construct binary decision variables $\zeta_{i1},\dots,\zeta_{in}$ where each $\zeta_{ik}$ indicates whether $x_k < y_i$. We formulate the constraint $x_k + M\zeta_{ik} \geq y_i$, where $M$ is a sufficiently large positive constant. This constraint ensures $\zeta_{ik}$ = 1 if $x_k < y_i$, otherwise $\zeta_{ik} = 0$. Thus, for any $y_i$, the value of $\tilde{F}_X(y_i)$ is equal to the optimal objective function value of the MILP problem
\begin{equation}
\min_{\zeta_{i1},\dots,\zeta_{in} \in \{0, 1\}^n}  \left\{\sum_{k=1}^n p_k\zeta_{ik} ~\Bigg|~ x_k + M\zeta_{ik} \geq y_i ~\forall~ k \in N\right\}. \label{eq:milp_fsd0}
\end{equation}

Hence, establishing condition \eqref{eq:th_pwfs} of Theorem \ref{th:discrete_MWSD} amounts to determining if there exists a feasible solution to a system of linear inequalities as formally stated by the lemma below.

\begin{lemma} \label{lem:milp_pwfs}
Consider two portfolios $X$ and $Y$. Then condition \eqref{eq:th_pwfs} of Theorem \ref{th:discrete_MWSD} holds if and only if there exists $\zeta \in \{0, 1\}^{n \times n}$ that satisfies the constraints
\begin{eqnarray}
& & x_k + M\zeta_{ik} \geq y_i ~\forall~ i, k \in N \label{eq:milp_fsd1} \\
& & \sum_{k=1}^n p_k\zeta_{ik} \leq \max\big\{F_Y(y_{i-1}), ~d^-\big\}  ~\forall~ i \in N  ~s.t.~ F_Y(y_{i-1}) < 1-d^+, \label{eq:milp_fsd2}
\end{eqnarray}
where $F_Y(y_0) = 0$ and M is a sufficiently large positive constant.
\end{lemma}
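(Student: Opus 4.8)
The plan is to mirror the template used for Lemmas \ref{lem:milp_gains} and \ref{lem:milp_losses}: first pin down the combinatorial meaning of any feasible $\zeta$ and connect it to the claim preceding the lemma, and then verify the two implications of the biconditional separately. The key observation is that, for a fixed $i \in N$, a vector $(\zeta_{i1},\dots,\zeta_{in}) \in \{0,1\}^n$ satisfying \eqref{eq:milp_fsd1} must have $\zeta_{ik} = 1$ whenever $x_k < y_i$: if instead $\zeta_{ik} = 0$, then \eqref{eq:milp_fsd1} reads $x_k \geq y_i$, a contradiction. Hence every feasible $\zeta$ obeys the componentwise bound $\zeta_{ik} \geq 1$ for all $k$ with $x_k < y_i$, so that
\begin{equation*}
\sum_{k=1}^n p_k \zeta_{ik} \;\geq\; \sum_{k \,:\, x_k < y_i} p_k \;=\; \tilde{F}_X(y_i),
\end{equation*}
with equality attained by taking $\zeta_{ik} = 1$ when $x_k < y_i$ and $\zeta_{ik} = 0$ otherwise, provided $M$ is large enough that \eqref{eq:milp_fsd1} is slack whenever $\zeta_{ik} = 1$; since all state-specific returns lie in $[a,b]$, any $M \geq b - a$ works. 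This is exactly the content of the claim stated just before the lemma, namely that $\tilde{F}_X(y_i)$ equals the optimal objective value of the MILP \eqref{eq:milp_fsd0}.

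For the \emph{only if} direction I would assume condition \eqref{eq:th_pwfs} holds and exhibit the explicit witness $\zeta_{ik} = 1$ if $x_k < y_i$ and $\zeta_{ik} = 0$ otherwise, for all $i,k \in N$. Constraint \eqref{eq:milp_fsd1} then holds by the choice of $M$, and for every $i \in N$ with $F_Y(y_{i-1}) < 1 - d^+$ we get $\sum_{k=1}^n p_k \zeta_{ik} = \tilde{F}_X(y_i) \leq \max\{F_Y(y_{i-1}), d^-\}$ directly from \eqref{eq:th_pwfs}, which is precisely \eqref{eq:milp_fsd2}. For the \emph{if} direction I would take any $\zeta \in \{0,1\}^{n \times n}$ satisfying \eqref{eq:milp_fsd1}--\eqref{eq:milp_fsd2}, fix an index $i$ with $F_Y(y_{i-1}) < 1 - d^+$, and apply the componentwise lower bound above to get $\tilde{F}_X(y_i) \leq \sum_{k=1}^n p_k \zeta_{ik}$; since \eqref{eq:milp_fsd2} bounds the right-hand side by $\max\{F_Y(y_{i-1}), d^-\}$, chaining the two inequalities yields \eqref{eq:th_pwfs}.

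The argument is essentially bookkeeping, so I do not anticipate a genuine obstacle; the one point requiring care is the role of the big-$M$ constant. Specifically, one must make explicit that a single choice $M \geq b - a$ simultaneously renders \eqref{eq:milp_fsd1} vacuous when $\zeta_{ik} = 1$ (needed to build the witness in the \emph{only if} part) and never obstructs the forced assignment $\zeta_{ik} = 1$ when $x_k < y_i$ (needed for the lower bound in the \emph{if} part). It is also worth emphasizing, as with Lemmas \ref{lem:milp_gains} and \ref{lem:milp_losses}, that the minimization in \eqref{eq:milp_fsd0} is what selects the tight vector $\zeta_{ik} = 1 \Leftrightarrow x_k < y_i$, yet the \emph{if} direction uses only feasibility of $\zeta$, not its optimality.
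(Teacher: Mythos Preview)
Your proposal is correct and follows essentially the same approach as the paper's proof: both directions hinge on the observation that any $\zeta$ feasible for \eqref{eq:milp_fsd1} satisfies $\sum_{k} p_k \zeta_{ik} \geq \tilde{F}_X(y_i)$, with equality at the minimizer of \eqref{eq:milp_fsd0}, and the rest is chaining with \eqref{eq:milp_fsd2}. Your version is a touch more explicit---you write down the witness $\zeta_{ik} = \mathbf{1}\{x_k < y_i\}$ directly rather than invoking the optimal solution of \eqref{eq:milp_fsd0}, and you spell out the sufficient choice $M \geq b-a$---but the underlying argument is identical.
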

\begin{proof}
See Appendix \ref{appx:proofs}.
\end{proof}

Together, Theorem \ref{th:milp_msd} and Lemma \ref{lem:milp_pwfs} establish the required dominance conditions according to MWSD as a system of linear constraints. This result is formally presented in the following theorem.

\begin{theorem} \label{th:milp_mwsd}
Consider a portfolio $X \in \mathbb{X}$ and a given benchmark $Y$. Then
\begin{enumerate}
    \item[$(i)$] If portfolio $X \in \mathbb{X}$ dominates the benchmark such that $X \succeq_{d^-}^{d^+} Y$, then there exist $\varphi \in \mathbb{R}^{n \times n}_+$, $\psi \in \mathbb{R}^{n}_+$, $\theta \in \mathbb{R}^{n \times n}_+$, $z \in \{0, 1\}^{n \times n}$, $\xi \in \{0, 1\}^{n}$, $\delta \in \mathbb{R}^{n^- \times n}_+$, and $\zeta \in \{0, 1\}^{n \times n}$ that satisfy the system of linear constraints $\eqref{eq:milp_i} - \eqref{eq:milp_iv}$, $\eqref{eq:milp_vi} -\eqref{eq:milp_ix}$, and $\eqref{eq:milp_fsd1} - \eqref{eq:milp_fsd2}$.
    \item[$(ii)$] If there exist $x = (x_1, \dots, x_n) \in \mathcal{X}$, $\varphi \in \mathbb{R}^{n \times n}_+$, $\psi \in \mathbb{R}^{n}_+$, $\theta \in \mathbb{R}^{n \times n}_+$, $z \in \{0, 1\}^{n \times n}$, $\xi \in \{0, 1\}^{n}$, $\delta \in \mathbb{R}^{n^- \times n}_+$, and $\zeta \in \{0, 1\}^{n \times n}$ that satisfy the system of linear constraints $\eqref{eq:milp_i} - \eqref{eq:milp_iv}$, $\eqref{eq:milp_vi}-\eqref{eq:milp_ix}$, and $\eqref{eq:milp_fsd1} - \eqref{eq:milp_fsd2}$, then portfolio $X \in \mathbb{X}$ is a benchmark dominating portfolio such that $X \succeq_{d^-}^{d^+} Y$.
\end{enumerate}
\end{theorem}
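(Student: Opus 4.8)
The plan is to obtain Theorem \ref{th:milp_mwsd} by decomposing the MWSD relation into its two constituent parts via Theorem \ref{th:discrete_MWSD} — namely $X \succeq Y$ together with condition \eqref{eq:th_pwfs} — and then invoking Theorem \ref{th:milp_msd} and Lemma \ref{lem:milp_pwfs} to render each part as a linear system. The observation that makes this clean is that the two linear subsystems share only the state-specific return vector $x$ (equivalently the asset weights $\lambda$, through $x_i = \sum_j \lambda_j x_{ji}$): the auxiliary variables $\varphi, \psi, \theta, z, \xi, \delta$ appear exclusively in the MSD block \eqref{eq:milp_i}--\eqref{eq:milp_iv}, \eqref{eq:milp_vi}--\eqref{eq:milp_ix}, whereas $\zeta$ appears exclusively in the FSD block \eqref{eq:milp_fsd1}--\eqref{eq:milp_fsd2}. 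Hence feasibility of the combined system for a portfolio $X$ is equivalent to simultaneous feasibility of the two blocks for the common return vector $x$ of that portfolio.

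For part $(i)$, I would start from $X \succeq_{d^-}^{d^+} Y$ and apply Theorem \ref{th:discrete_MWSD} to extract $X \succeq Y$ and condition \eqref{eq:th_pwfs}. Theorem \ref{th:milp_msd}$(i)$ then supplies $\varphi, \psi, \theta, z, \xi, \delta$ satisfying \eqref{eq:milp_i}--\eqref{eq:milp_iv} and \eqref{eq:milp_vi}--\eqref{eq:milp_ix} for the given $x$, and Lemma \ref{lem:milp_pwfs} supplies $\zeta$ satisfying \eqref{eq:milp_fsd1}--\eqref{eq:milp_fsd2} for the same $x$. Since these two collections of auxiliary variables are disjoint, they can be asserted jointly, which is exactly the claimed existence statement. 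The only bookkeeping point is the big-$M$ constant: Theorem \ref{th:milp_msd} and Lemma \ref{lem:milp_pwfs} each require $M$ to exceed a threshold determined by the finite range of the state-specific returns on $[a,b]$, so taking $M$ at least as large as the maximum of the two thresholds makes a single constant valid throughout the combined system.

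For part $(ii)$, I would run the argument in reverse. Given $x \in \mathcal{X}$ together with $\varphi, \psi, \theta, z, \xi, \delta, \zeta$ satisfying the full system, the subtuple $(x, \varphi, \psi, \theta, z, \xi, \delta)$ satisfies precisely the hypothesis of Theorem \ref{th:milp_msd}$(ii)$, so $X \succeq Y$; and $(x, \zeta)$ satisfies the hypothesis of Lemma \ref{lem:milp_pwfs}, so condition \eqref{eq:th_pwfs} holds. Theorem \ref{th:discrete_MWSD} then combines these two conclusions into $X \succeq_{d^-}^{d^+} Y$.

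I do not expect any genuine obstacle here, as the theorem is a modular assembly of results already established. The one place requiring a small amount of care is checking that the MSD block and the FSD block impose no conflicting requirements on $x$; this is immediate because the two blocks were derived independently from conditions \eqref{eq:th_gains}--\eqref{eq:th_losses} and \eqref{eq:th_pwfs}, respectively, and by Theorem \ref{th:discrete_MWSD} these hold simultaneously exactly when $X \succeq_{d^-}^{d^+} Y$. Thus the proof reduces to citing Theorems \ref{th:discrete_MWSD} and \ref{th:milp_msd} and Lemma \ref{lem:milp_pwfs}, noting the disjointness of the auxiliary variable sets, and reconciling the two big-$M$ constants.
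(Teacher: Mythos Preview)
Your proposal is correct and follows essentially the same approach as the paper: both proofs decompose MWSD via Theorem \ref{th:discrete_MWSD} into $X \succeq Y$ plus condition \eqref{eq:th_pwfs}, then invoke Theorem \ref{th:milp_msd} and Lemma \ref{lem:milp_pwfs} for the two blocks in each direction. Your additional remarks on the disjointness of the auxiliary variable sets and on reconciling the big-$M$ constants are sound elaborations but not departures from the paper's argument.
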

\begin{proof}
See Appendix \ref{appx:proofs}.
\end{proof}

Theorem \ref{th:milp_mwsd} can be used to obtain a MILP formulation to identify a benchmark dominating portfolio $X \in \mathbb{X}$ that maximizes the expected return for any MWSD investor with an inverse S-shaped utility function and a pair of inverse S-shaped probability weighting functions producing subjective decision weights.
\begin{eqnarray*}
\textbf{M2}: \max_{\lambda, ~\varphi, ~\psi, ~\theta \atop z, ~\xi, ~\delta, ~\zeta} & & \sum_{i=1}^n p_i \sum_{j=1}^m \lambda_{j}x_{ji} \\
s.t.
& & \eqref{eqn:MSD_gains_i} - \eqref{eqn:MSD_devar_iii} \\
& & \sum_{j=1}^{m}\lambda_jx_{jk} + M\zeta_{ik} \geq y_i ~\forall~ i, k \in N \\
& & \sum_{k=1}^n p_k\zeta_{ik} \leq \max\big\{F_Y(y_{i-1}), ~d^-\big\}  ~\forall~ i \in N  ~s.t.~ F_Y(y_{i-1}) < 1-d^+ \\
& & \zeta_{ik} \in  \{0, 1\} ~\forall~ i, k \in N,
\end{eqnarray*}
where $F_Y(y_0) = 0$ and $M$ is a sufficiently large positive constant.

\begin{corollary} \label{cor:milp_mwsd}
$X^* = \sum_{j=1}^m \lambda_j^*X_j$ is an optimal solution to the stochastic optimization problem \eqref{eq:concpt_MWSD_model} if and only if the MILP problem \textbf{M2} has an optimal solution $(\lambda^*, \varphi^{*}, \psi^{*}, \theta^{*}, z^{*}, \xi^{*}, \delta^{*}, \zeta^{*})$.
\end{corollary}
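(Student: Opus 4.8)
The plan is to derive Corollary \ref{cor:milp_mwsd} directly from Theorem \ref{th:milp_mwsd}, in exactly the way Corollary \ref{cor:milp_msd} follows from Theorem \ref{th:milp_msd}. The one observation that makes everything work is that Theorem \ref{th:milp_mwsd} identifies the projection of the feasible set of \textbf{M2} onto the weight space $\Lambda$ with the set of benchmark-dominating portfolios $\{\lambda \in \Lambda \mid \sum_{j=1}^m \lambda_j X_j \succeq_{d^-}^{d^+} Y\}$, while the objective $\sum_{i=1}^n p_i \sum_{j=1}^m \lambda_j x_{ji}$ of \textbf{M2} equals $\mathbb{E}[X]$ for $X = \sum_{j=1}^m \lambda_j X_j$ (since $x_i = \sum_{j=1}^m \lambda_j x_{ji}$ and $\mathbb{E}[X] = \sum_{i=1}^n p_i x_i$) and does not involve the auxiliary variables $\varphi,\psi,\theta,z,\xi,\delta,\zeta$. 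Given this correspondence, the equivalence of optimal solutions is immediate, and the argument is a standard two-sided contradiction.

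For the forward direction I would assume $X^* = \sum_{j=1}^m \lambda_j^* X_j$ solves \eqref{eq:concpt_MWSD_model}, so $X^* \in \mathbb{X}$ and $X^* \succeq_{d^-}^{d^+} Y$. By Theorem \ref{th:milp_mwsd}$(i)$ there exist $\varphi^*,\psi^*,\theta^*,z^*,\xi^*,\delta^*,\zeta^*$ making $(\lambda^*,\varphi^*,\psi^*,\theta^*,z^*,\xi^*,\delta^*,\zeta^*)$ feasible for \textbf{M2}, with objective value $\mathbb{E}[X^*]$. If this point were not optimal for \textbf{M2}, some feasible $(\lambda,\varphi,\psi,\theta,z,\xi,\delta,\zeta)$ would attain a strictly larger objective; then Theorem \ref{th:milp_mwsd}$(ii)$ gives $X = \sum_{j=1}^m \lambda_j X_j \in \mathbb{X}$ with $X \succeq_{d^-}^{d^+} Y$ and $\mathbb{E}[X] > \mathbb{E}[X^*]$, contradicting optimality of $X^*$ in \eqref{eq:concpt_MWSD_model}. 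Conversely, if $(\lambda^*,\varphi^*,\psi^*,\theta^*,z^*,\xi^*,\delta^*,\zeta^*)$ is optimal for \textbf{M2}, Theorem \ref{th:milp_mwsd}$(ii)$ shows $X^* = \sum_{j=1}^m \lambda_j^* X_j \in \mathbb{X}$ with $X^* \succeq_{d^-}^{d^+} Y$, hence feasible for \eqref{eq:concpt_MWSD_model}; were it suboptimal there, some $X = \sum_{j=1}^m \lambda_j X_j \in \mathbb{X}$ with $\lambda \in \Lambda$, $X \succeq_{d^-}^{d^+} Y$, and $\mathbb{E}[X] > \mathbb{E}[X^*]$ would exist, and Theorem \ref{th:milp_mwsd}$(i)$ applied to $X$ extends $\lambda$ to a feasible point of \textbf{M2} with objective $\mathbb{E}[X] > \mathbb{E}[X^*]$, a contradiction.

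I do not expect any substantial obstacle, since the real work is already carried out in Theorem \ref{th:milp_mwsd} (and, through it, in Lemmas \ref{lem:milp_gains}--\ref{lem:milp_losses}, Theorem \ref{th:milp_msd}, and Lemma \ref{lem:milp_pwfs}). The only points needing a little care are bookkeeping ones: confirming that the auxiliary-variable blocks listed in \textbf{M2} are precisely those produced by Theorem \ref{th:milp_mwsd} (in particular that the $\zeta$-block coming from Lemma \ref{lem:milp_pwfs} is included alongside the $\varphi,\psi,\theta,z,\xi,\delta$ blocks inherited from the MSD constraints), and that moving between the conceptual formulation over $X \in \mathbb{X}$ and the MILP over $\lambda \in \Lambda$ leaves the attained objective unchanged. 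Accordingly the proof will be short and may simply be stated as following directly from Theorem \ref{th:milp_mwsd}, in parallel with the proof of Corollary \ref{cor:milp_msd}.
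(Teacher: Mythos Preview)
Your proposal is correct and takes essentially the same approach as the paper, which simply states that the corollary follows directly from Theorem \ref{th:milp_mwsd}. Your write-up supplies the standard details behind that one-line justification, and your anticipation that the proof parallels Corollary \ref{cor:milp_msd} is exactly right.
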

\begin{proof}
Corollary \ref{cor:milp_mwsd} follows directly from Theorem \ref{th:milp_mwsd}.
\end{proof}

Note that Corollaries \ref{cor:milp_msd} and \ref{cor:milp_mwsd} demonstrate how Theorems \ref{th:milp_msd} and \ref{th:milp_mwsd} can be used to formulate benchmark dominating expected return maximizers under MSD and MWSD constraints, respectively. In practice, however, the optimal portfolio $X^*$ need not be unique, as multiple portfolios may satisfy the respective dominance constraints and attain the same optimal objective function value. On the other hand, for a given benchmark distribution, the feasible set of the MILP problem \textbf{M1} or \textbf{M2} may be empty. In such cases, there exists no feasible solution (cf. portfolio) that satisfies the MSD or MWSD constraints, implying that the benchmark distribution cannot be dominated according to the corresponding criterion and is therefore justified as the preferred investment alternative for certain class of behavioral investors.

\section{Application to Industry Portfolio Analysis} \label{sec5}

In this section, we apply the developed stochastic optimization models to empirical financial data on asset returns of industry portfolios. In particular, we assess the efficiency of the benchmark market portfolio for behavioral investors exhibiting MSD and MWSD preferences. More specifically, we test if a benchmark dominating portfolio can be constructed from a convex combination of these base assets such that it would be preferred by all MSD investors with an inverse S-shaped utility function. We then investigate whether such a dominant portfolio can also be identified, which would be preferred by all MWSD investors characterized by an inverse S-shaped utility function and inverse S-shaped probability weighting functions (PWFs). In these tests, we also deploy different reference points and sets of feasible PWFs to examine and generalize MSD and MWSD investors' portfolio decision behavior in a descriptive manner.

\subsection{Data}

We use monthly return observations of the 49 value-weighted industry portfolios from the Fama-French data library to represent the base assets, the returns of which are modeled by random variables $X_{1},\dots, X_{49}$. Moreover, we utilize monthly return observations of the all-share index from the Center for Research in Security Prices (CRSP) to proxy the benchmark market portfolio $Y$, which tracks the value-weighted returns of all CRSP firms incorporated in the US and listed on the NYSE, AMEX, or NASDAQ exchanges. Specifically, our dataset includes all monthly return observations from Jan/1964 -- Dec/2024, covering a sample period of 61 years, or 732 trading months. This sample choice (see also \citealt{Kuosmanen2004, ArvanitisEtAl2020}) is motivated by the empirical fact that several industries were not listed prior to mid-1963, leading to substantial industry-specific return observations missing from earlier years.

\subsection{Empirical Test Specification}

Our empirical analysis employs a rolling estimation window of $n = 36$ months, which shifts ahead by a 12-month period at a time. Using this estimation approach, our dataset yields a total of 59 overlapping 3-year periods Jan/1964 -- Dec/1966, Jan/1965 -- Dec/1967, \dots, Jan/2022 -- Dec/2024. For each 3-year estimation period, we solve two MILP problems \textbf{M1} (MSD), and \textbf{M2} (MWSD) with two threshold parameter values $d^-=d^+=0.18$ (see Section \ref{sec4}). As part of our sensitivity analysis, preliminary tests were conducted to help determine the appropriate values for these parameters in the MILP problem \textbf{M2} (MWSD). Specifically, it follows from \eqref{eq:implied_mwsd_relation} that if $X \succeq_{0.18}^{0.18} Y$, then the established dominance also holds for any values greater than $d^-=d^+=0.18$, whereas for values sufficiently below $d^-=d^+=0.18$, only FSD can be found in most test cases. Moreover, for each MILP problem, we test particularly two reference points: (i) the risk-free rate $(r=rf)$, and (ii) the median of benchmark returns $(r=md)$, i.e., $r=\text{median}(Y)$. The underlying state-space $S$ of each MILP problem is built using the corresponding return observations of the base assets $X_{1}, \dots, X_{49}$, with a vector of equal state probabilities $p=(\frac{1}{36}, \dots, \frac{1}{36})$. As is common in the SD-based portfolio optimization literature, the general probabilistic assumption here is that each state $s_i \in S$ is equally likely to be realized and historical vectors of the base asset returns can represent plausible scenarios of future return outcomes (\citealt{PostEtAl2018}, see also \citealt{Xu2024}). If the reference point $r$ is not directly observable, e.g., the risk-free rate does not coincide with any of the state-specific returns of the benchmark market portfolio $Y$, then the state-space $S$ is augmented by a zero-probability state whose state-specific return is set to the geometric mean of the monthly T-bill rates (see also \citealt{XuLiesio2026}). The MILP problems \textbf{M1} (MSD) and \textbf{M2} (MWSD) were implemented in Python and solved by Gurobi 11.0.3. The computations were undertaken on a university-managed High Performance Computing (HPC) cluster facility.

\subsection{Results and Findings}

The upper panels in Figure \ref{fig1} illustrate the excess returns over the benchmark market portfolio when two different types of returns are deployed as the reference point $r$ in solving \textbf{M1} (MSD) and \textbf{M2} (MWSD) problems for each estimation window over years 1964--2024: in (a) the geometric mean of monthly T-bill risk-free rates $(r=rf)$, and (b) the median of benchmark returns $(r=md)$. In turn, the two lower panels report the numbers of industries $(m=49)$ included in their resulting benchmark dominating portfolios by MSD and MWSD: $X^{rf}$ and $X_{0.18}^{rf}$ in (c), and $X^{md}$ and $X_{0.18}^{md}$ in (d), respectively, for each year in 1964--2024. Figure \ref{fig2} quantifies the popularity measured as the frequency (\%) of each industry $j \in \{1,\dots,49\}$ being represented in the dominant MSD and MWSD portfolios $X^{rf}$ and $X_{0.18}^{rf}$, as well as $X^{md}$ and $X_{0.18}^{md}$ across years 1964--2024. In particular, Figure \ref{fig3}(a) and (b) highlight the most and least preferred industries, respectively, in portfolios $X^{rf}$, $X_{0.18}^{rf}$, $X^{md}$, and $X_{0.18}^{md}$. Moreover, Figure \ref{fig4} demonstrates the `min--max' range of asset weights $\lambda_j \in [0,1]$, as well as their averaged values, for each of the $m=49$ industries in each of these dominant portfolios over years 1964--2024 (see Appendix \ref{appx:heatmaps} for detailed visualizations of asset compositions).

We first find that the benchmark market portfolio is always MSD-inefficient and MWSD-inefficient as well, since another portfolio can be constructed from the base assets (industry portfolios) that stochastically dominates the benchmark by MSD and MWSD, yielding an excess return of roughly 0.75\% -- 1.75 \% in most 3-year periods. A remarkable exception arises when the 3-year estimation window rolls into the period of 2000, during the years of which, however, the benchmark market portfolio becomes considerably inefficient in the sense of MSD and MWSD. For instance, MSD and MWSD investors can beat the market portfolio by approximately 3.5\% in return during episodes of financial turmoil such as the Dot-com Bubble (2000) and the September 11 Attacks (2001). We also note that using different reference points, namely, the risk-free rate, which is always non-negative, or the median of benchmark returns, which fluctuates between negative and positive values but ensures a 50–50 distribution governing losses and gains in the benchmark, has little impact on the observed excess returns (see Figure \ref{fig1}(a) and (b)). This result confirms that the benchmark market portfolio is indeed MSD- and MWSD-inefficient, regardless of whether MSD and MWSD investors perceive gains and losses relative to the risk-free rate $(r=rf)$ or the benchmark median $(r=\text{median}(Y))$.

\begin{figure}[h!]
\includegraphics[width=\textwidth, keepaspectratio]{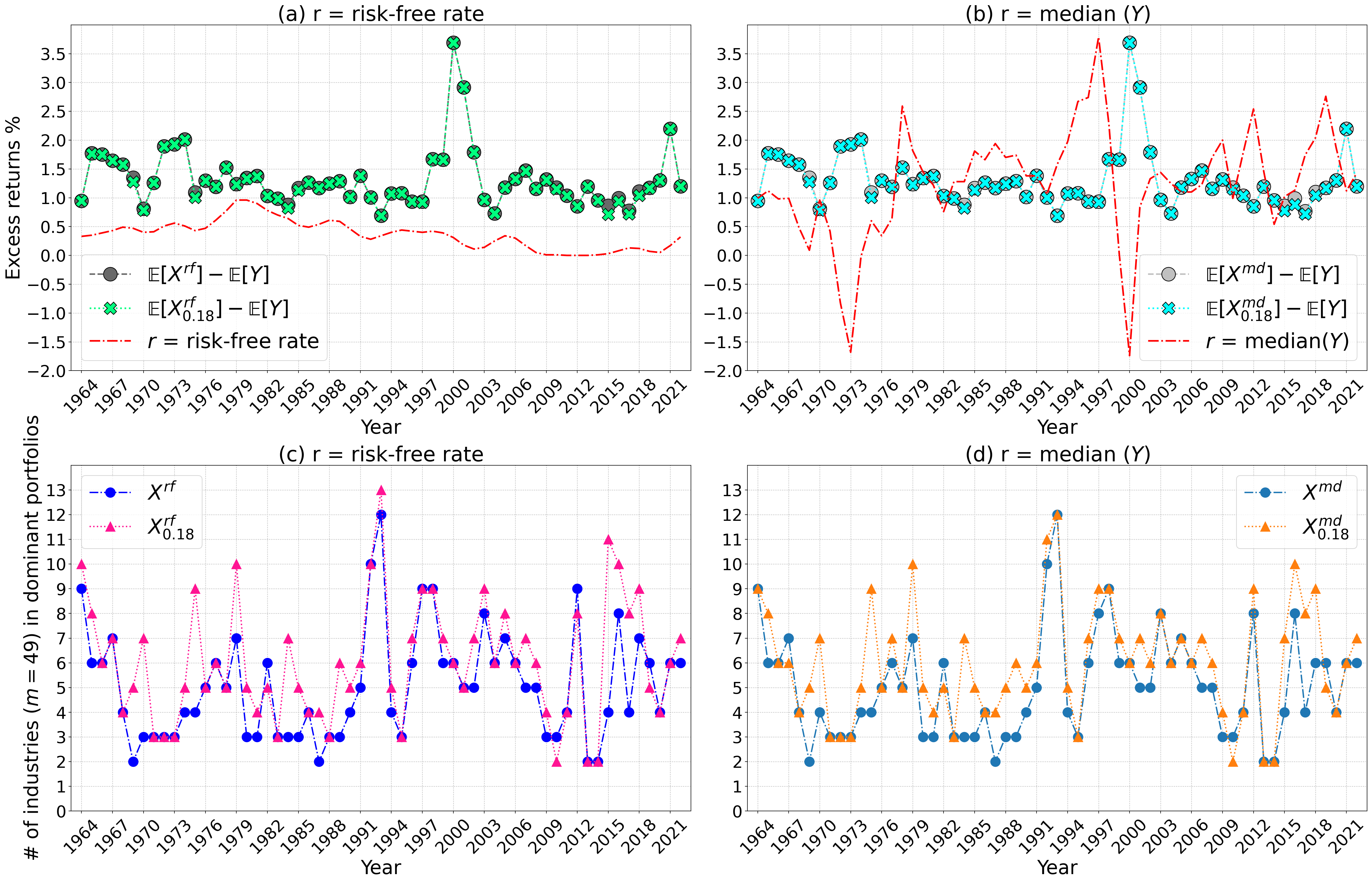}
\caption{Inefficiency of the market portfolio and number of industries (out of 49) included in MSD portfolios $X^{rf}$, $X^{md}$ and MWSD portfolios $X_{0.18}^{rf}$, $X_{0.18}^{md}$ in 1964--2024.}\label{fig1}
\end{figure}

Next, we document that in most 3-year periods MWSD investors who subjectively distort the state probabilities of gains and losses earn portfolio returns nearly as great as those of MSD investors. MWSD returns are only very marginally lower in some periods (see Figure \ref{fig1}(a) and (b)). Thus, allowing sets of feasible inverse S-shaped probability weighting functions (PWFs) results in trivial return underperformance for MWSD investors as compared to MSD investors. Moreover, we then find that MSD and MWSD investors do diversify across the asset universe rather than allocating all capital entirely into one base asset. In particular, MSD and MWSD investors diversify their portfolios by investing into 2--13 out of the $m=49$ industries (see Figure \ref{fig1}(c) and (d)) regardless of the reference point applied. Surprisingly, MSD and MWSD investors appear to hold largely different portfolios in quite many 3-year periods, although both achieve almost equally on portfolio returns. In general, MWSD investors have to diversify more broadly across the underlying asset universe to accommodate subjective probability distortion. The required diversification is considerably broader during those 3-year periods, especially, when MWSD investors earn slightly lower portfolio returns than their MSD counterparts.

Then, we observe explicit asset preferences to certain industries among MSD and MWSD investors from their portfolio asset compositions (see Appendix \ref{appx:heatmaps} for detailed visualizations of asset compositions). Clearly, most of the $m=49$ industries have secured their asset representation in MSD and MWSD portfolios (see Figure \ref{fig2}). In particular, the most preferred industries are $\lceil \text{Soda} \rfloor$, $\lceil \text{Smoke} \rfloor$, $\lceil \text{Guns} \rfloor$, $\lceil \text{Gold} \rfloor$, $\lceil \text{Coal} \rfloor$, and $\lceil \text{Oil} \rfloor$. Each of these industries attracts no less than 12 asset allocations over a total of 59 3-year periods (see Figure \ref{fig3}(a)). Interestingly, $\lceil \text{Gold} \rfloor$ surfaces as the most popular industry among others for both MSD and MWSD investors. The underlying intuition is arguably that MSD and MWSD investors are featured with risk aversion over losses, while gold is typically considered a stable safe-haven asset whose price positively correlates with risk aversion. Nevertheless, the asset weight of $\lceil \text{Gold} \rfloor$ has never exceeded 50\% in any MSD portfolios $X^{rf}$, $X^{md}$ or any MWSD portfolios $X_{0.18}^{rf}$, $X_{0.18}^{md}$ over years 1964--2024 (see Figure \ref{fig4}). Among the most preferred industries, we also note that each of them is normally more popular among MWSD investors than MSD investors. Such an empirical observation aligns consistently with our finding that MWSD investors diversify more broadly than MSD counterparts. On the other hand, however, the least preferred industries are $\lceil \text{Chems} \rfloor$, $\lceil \text{Rubbr} \rfloor$, $\lceil \text{BldMt} \rfloor$, $\lceil \text{BusSv} \rfloor$, $\lceil \text{LabEq} \rfloor$, $\lceil \text{Paper} \rfloor$, as well as $\lceil \text{Whlsl} \rfloor$. Each of these industries receives no more than 1 asset allocation over 59 3-year periods (see Figure \ref{fig3}(b)). Notably, three industries $\lceil \text{Rubbr} \rfloor$, $\lceil \text{BldMt} \rfloor$, and $\lceil \text{Whlsl} \rfloor$ are not included in any MSD portfolios $X^{rf}$, $X^{md}$ or any MWSD portfolios $X_{0.18}^{rf}$, $X_{0.18}^{md}$, while only $\lceil \text{BusSv} \rfloor$ is never included in MSD portfolios $X^{rf}$, $X^{md}$ across years 1964--2024.

Finally, to complement the in-sample analysis presented thus far, we also report the out-of-sample performance of MSD portfolios $X^{rf}$, $X^{md}$ and MWSD portfolios $X_{0.18}^{rf}$, $X_{0.18}^{md}$ and compare their performance with that of the benchmark market portfolio over 58 non-overlapping 1-year periods in 1967--2024 (see Table \ref{tab:outofsample}). Surprisingly, behavioral investors with MSD and MWSD preferences do not, on average, substantially underperform the benchmark market portfolio out-of-sample. Nevertheless, both MSD portfolios $X^{rf}$, $X^{md}$ and MWSD portfolios $X_{0.18}^{rf}$, $X_{0.18}^{md}$ are moderately less risky than the benchmark market portfolio in terms of downside risk. This finding is supported by their lower expected shortfall (CVaR$_{5\%}$) and marginally higher risk-adjusted Sortino ratios. Interestingly, MSD and MWSD investors can outperform the benchmark market portfolio in cumulative portfolio growth over a sufficiently long investment horizon (see Figure \ref{fig5}).

\begin{table}[h!]
\centering
\begin{small}
\caption{Annualized performance of MSD portfolios $X^{rf}$, $X^{md}$ (out-of-sample), MWSD portfolios $X_{0.18}^{rf}$, $X_{0.18}^{md}$ (out-of-sample), and benchmark market portfolio $Y$, 1967--2024. \label{tab:outofsample}}
\begin{tabular}{lllllll}
\hline\noalign{\smallskip}
Portfolio & & Mean & Std. & CVaR$_{5\%}$ & Sharpe & Sortino \\
\noalign{\smallskip}
          & & \%   & \%   & \%  & ratio & ratio \\ 
\noalign{\smallskip}
\hline
\noalign{\smallskip}
$X^{rf}$ & & 12.22 & 18.27 & -26.93 & 0.419 & 0.784 \\
\noalign{\smallskip}
$X_{0.18}^{rf}$ & & 12.22 & 17.85 & -25.23 & 0.430 &  0.802 \\
\noalign{\smallskip}
$X^{md}$ & & 12.17 & 18.22 & -26.93 & 0.418 & 0.778 \\
\noalign{\smallskip}
$X_{0.18}^{md}$ & & 12.18 & 17.75 & -25.18 & 0.431 & 0.805 \\
\noalign{\smallskip}
$Y$ & & 12.32 & 17.62 & -28.54 & 0.435 & 0.764 \\
\noalign{\smallskip}
\hline
\end{tabular}
\vspace{0.5em}
\parbox{0.95\linewidth}{\footnotesize
Note: Sharpe and Sortino ratios are computed as the mean of the annualized portfolio returns in excess of the annualized risk-free rates, divided by the standard deviation and downside deviation of the annualized excess returns, respectively.}
\end{small}
\end{table}

\begin{figure}[h!]
\includegraphics[width=\textwidth, keepaspectratio]{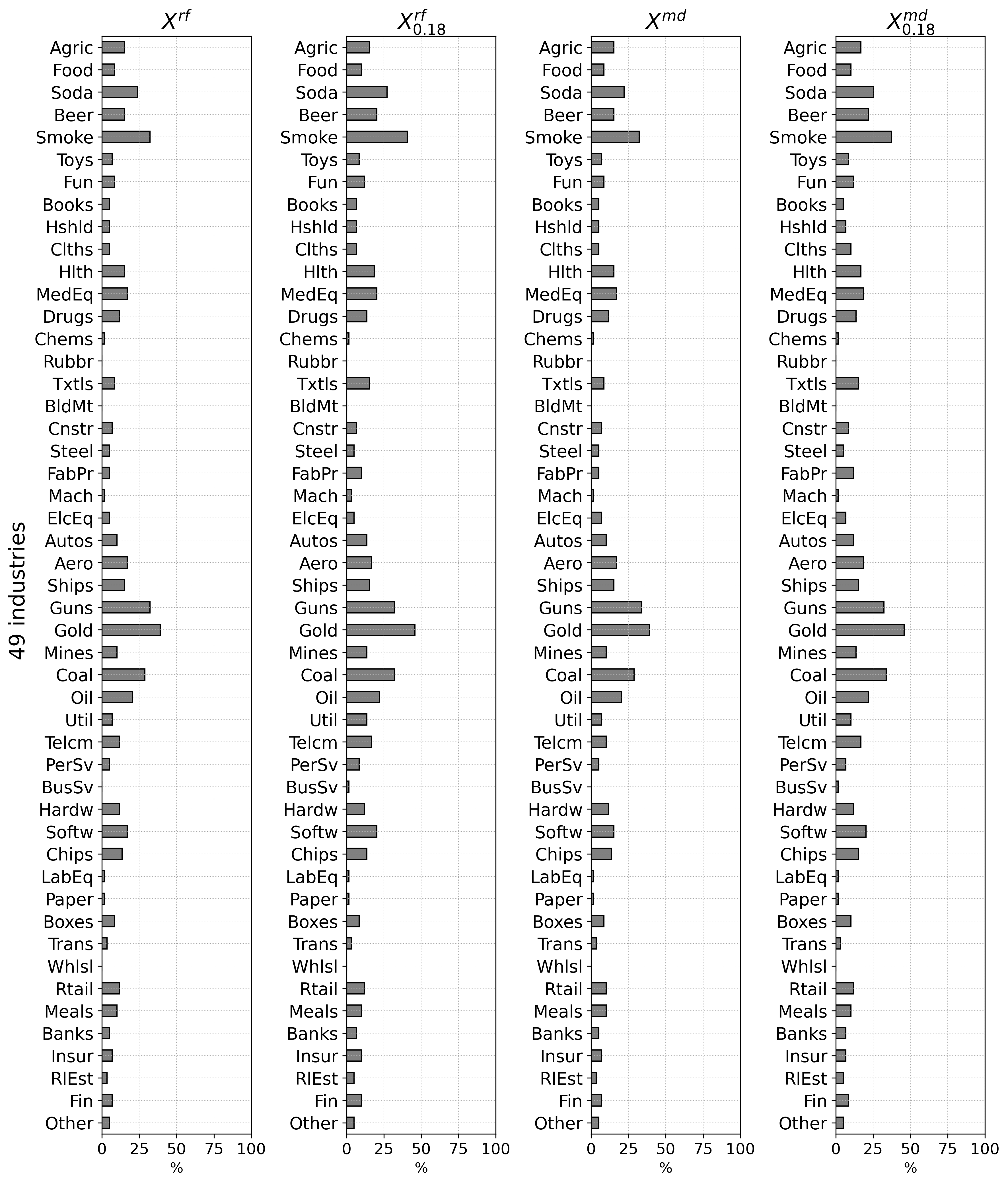}
\caption{Popularity of the 49 industries in MSD portfolios $X^{rf}$, $X^{md}$ and MWSD portfolios $X_{0.18}^{rf}$, $X_{0.18}^{md}$.}\label{fig2}
\end{figure}

\begin{figure}[h!]
\includegraphics[width=\textwidth, keepaspectratio]{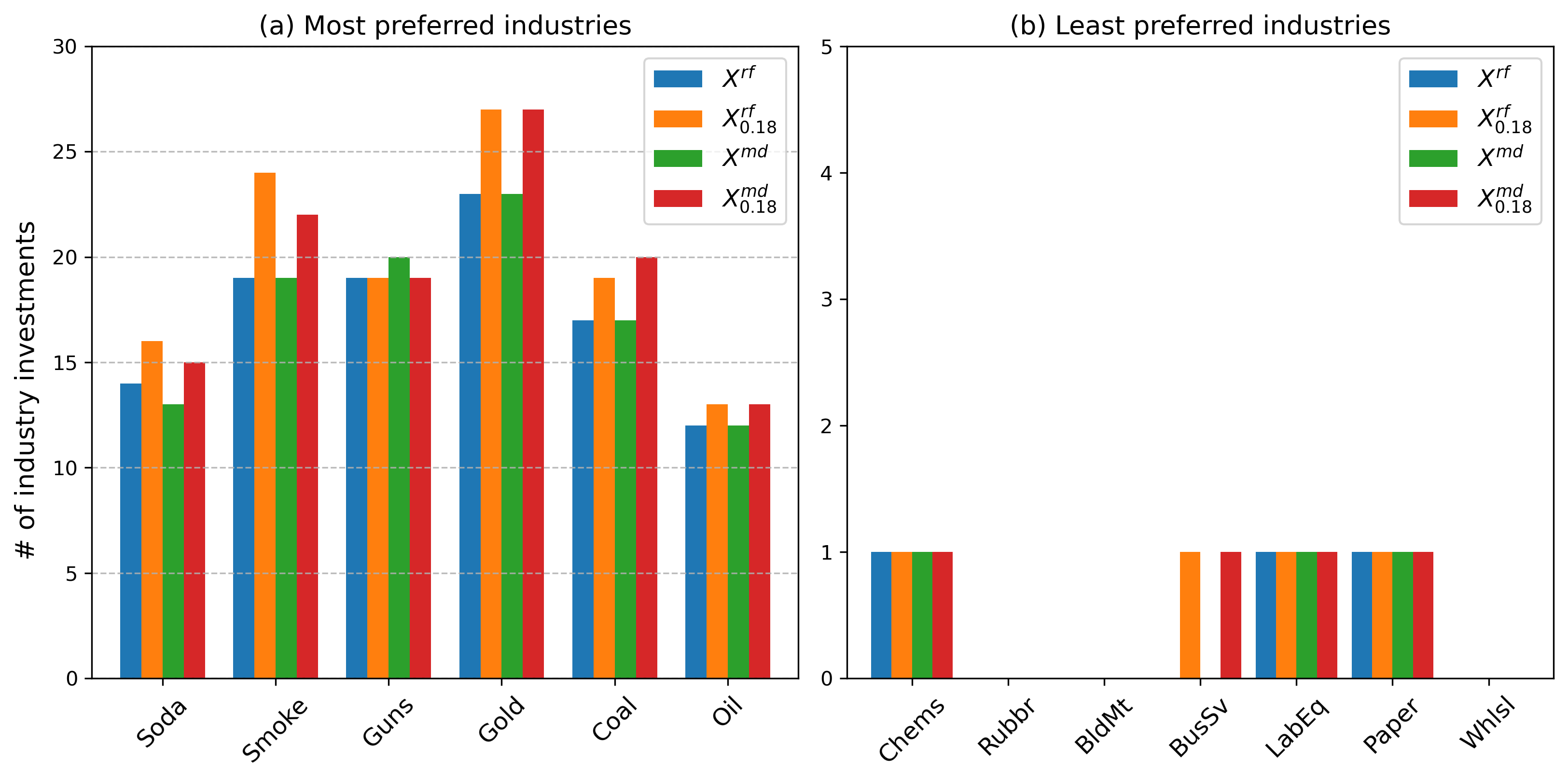}
\caption{Most and least preferred industries in MSD portfolios $X^{rf}$, $X^{md}$ and MWSD portfolios $X_{0.18}^{rf}$, $X_{0.18}^{md}$.}\label{fig3}
\end{figure}

\begin{figure}[h!]
\includegraphics[width=\textwidth, keepaspectratio]{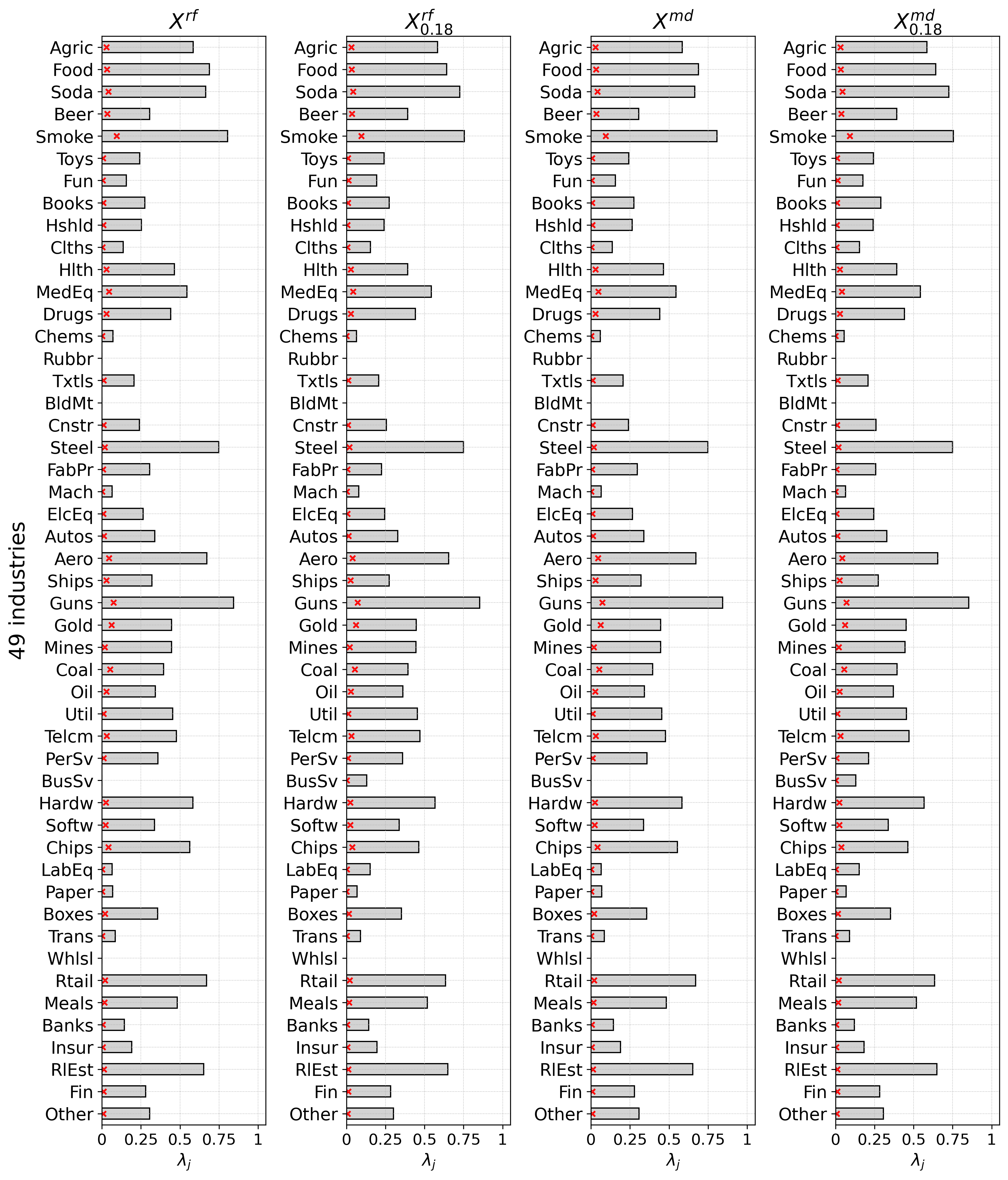}
\caption{Asset compositions of MSD portfolios $X^{rf}$, $X^{md}$ and MWSD portfolios $X_{0.18}^{rf}$, $X_{0.18}^{md}$. The gray bars indicate the `min--max' range of asset weights $\lambda_j \in [0,1]$ and the red crosses denote their mean values for $j \in \{1,\dots,49\}$.}\label{fig4}
\end{figure}

\begin{figure}[h!]
\includegraphics[width=\linewidth, keepaspectratio]{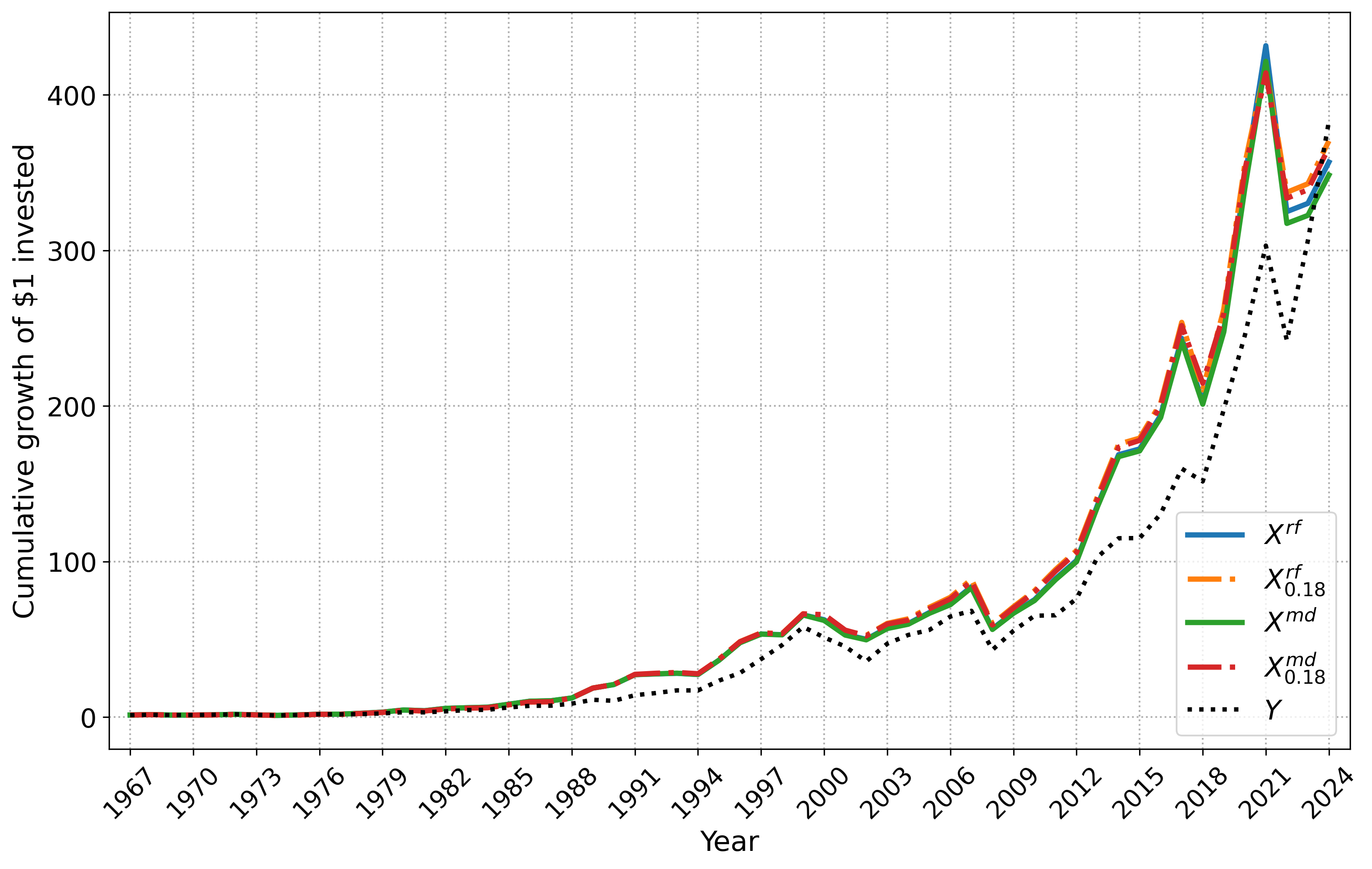}
\caption{Cumulative growth of \$1 invested in MSD portfolios $X^{rf}$, $X^{md}$, MWSD portfolios $X_{0.18}^{rf}$, $X_{0.18}^{md}$, and benchmark market portfolio $Y$, obtained by compounding annual portfolio returns over years 1967--2024.} \label{fig5}
\end{figure}

\section{Conclusions and Avenues of Future Research} \label{sec6}

In this research, we have developed novel stochastic portfolio optimization models for two distinct Markowitz Stochastic Dominance (MSD) criteria allowing incompletely or imprecisely specified risk preferences. MSD preferences are typically characterized by an inverse S-shaped utility function and a pair of inverse S-shaped probability weighting functions, both of which often require explicit and complete non-linear specifications. However, we adopt a different approach by admitting such incomplete MSD preferences harnessing the MSD and MWSD conditions established in a discrete setting. We have further demonstrated that these dominance conditions can be mathematically expressed in the form of a system of linear constraints and their resulting portfolio optimization models can be formulated as mixed-integer linear programming (MILP) problems. The developed portfolio optimization models enable us to identify an optimal portfolio that is preferred to a pre-specified benchmark return distribution by all behavioral investors whose portfolio choice behavior aligns with MSD or MWSD preferences.

Furthermore, our empirical analysis reveals that the benchmark market portfolio is MSD- and MWSD-inefficient and thus cannot be justified as an optimal investment alternative for behavioral investors with MSD and MWSD preferences. Moreover, classic behavioral factors such as the choice of reference point (risk-free rate or benchmark median) and subjective probability distortion report little to no influence on MSD and MWSD portfolio returns. However, notable distinctions arise between these two behavioral investor classes, as empirically observed in their portfolio decision behavior with respect to diversification patterns and asset compositions. Finally, both MSD and MWSD investors exhibit explicit asset preferences to particular industries, where $\lceil \text{Gold} \rfloor$ emerging as the most favored one among them.

This paper also opens up several avenues for future research. In theory, the next natural step is to establish MWSD conditions that explicitly incorporate another behavioral factor, i.e., loss aversion, on top of subjective probability distortion. Prior empirical studies have found that investors subjectively distort probabilities less when a loss is perceived than a gain. Building on this empirical evidence, future theoretical work could develop dominance conditions that accommodate all inverse S-shaped utilities with steeper slopes in losses than in gains, as well as all inverse S-shaped probability weighting functions (PWFs) whose slopes are greater in the loss domain, i.e., $\frac{\partial}{\partial t}w^-(\cdot) > \frac{\partial}{\partial t}w^+(\cdot)$, to allow asymmetric probability weighting. Methodologically, new stochastic optimization problems that utilize these novel dominance conditions can subsequently be developed to better describe and analyze behavioral investors with such explicit preferences.

Although the proposed MSD- and MWSD-constrained optimization models are not intended to prescribe rational portfolio decisions as EUT-based normative decision models do, they provide new analytical and computational tools for fund and portfolio managers to extract behavioral insights from actual portfolio decisions. Specifically, financial practitioners can utilize these advances in portfolio optimization to evaluate and justify whether clients' portfolio choices are consistent with specific behavioral preferences. In a recent work, \cite{XuLiesio2026} develop stochastic optimization models for incomplete Cumulative Prospect Theory (CPT; \citealt{TverskyKahneman1992}) preferences, which represent the opposite of MSD and MWSD preferences. Their results show that CPT investors with a pair of convex PWFs do not adequately diversify across the available assets, but instead occasionally allocate their entire wealth to just one single asset. Taken together, the findings of the present study and those of \cite{XuLiesio2026} illustrate how behavioral optimization models can reveal systematic patterns in the observed portfolio choices of certain investor groups that cannot be accommodated by normative portfolio models alone.

Consequently, one important direction for future research is to apply both normative and behavioral portfolio optimization models to global asset markets, such as cryptocurrencies, using daily or weekly return data, across different sample periods, and in decision settings with other benchmark alternatives or without explicit benchmarks (see, e.g., \citealt{Longarela2016, LiesioEtAl2023}). Comparing the portfolios generated by these complementary paradigms would enable financial practitioners to provide more personalized portfolio recommendations, improve investor profiling, and construct portfolios that appropriately balance rational asset allocations with investors' behavioral preferences and quantify the associated trade-offs between them.

Beyond portfolio analysis, the stochastic optimization models developed in this research have significant practical applicability and are well-suited for descriptive behavioral analyses across a broad range of fields such as behavioral finance and economics, behavioral operations and supply chain management. For instance, another promising direction for future research is to apply the developed models in an operations management context to describe and analyze forest harvesting decisions (\citealt{Heikkinen2003, Kuosmanen2004, LejeuneKettunen2017, SinhaEtAl2017}). In such a setting, the base assets would correspond to forestry assets of harvestable stands of tree species, saw-logs or pulpwood (see, e.g., \citealt{Kuosmanen2004}). The optimization models here can then be used to identify if there exists another return or cost distribution that dominates a benchmark in the sense of MSD or MWSD, and a forest or landowner's harvesting routine can serve as the benchmark to assess whether MSD or MWSD justifies the current operational practices.

\onehalfspacing
\setlength\bibsep{0pt}
\bibliographystyle{apalike}
\bibliography{Ref}

%\begin{figure}[hp]
%  \centering
%  \includegraphics[width=.6\textwidth]{../fig/placeholder.pdf}
%  \caption{Placeholder}
%  \label{fig:placeholder}
%\end{figure}

\clearpage
\appendix
\section{Proofs}\label{appx:proofs}

\noindent \textbf{Proof of Theorem \ref{th:discrete_MSD}}\label{appx:th_discrete_MSD}

By condition \eqref{eq:msd_gains} of Proposition \ref{prop:msd}, for all $t^+ \in [r, b]$, it follows
\begin{eqnarray*}
& &\int_{t^+}^b \big[F_Y(\tau) - F_X(\tau)\big]d\tau \geq 0 \\
&\Leftrightarrow &\int_{t^+}^b F_Y(\tau)d\tau - \int_{t^+}^b F_X(\tau)d\tau \geq 0 \\
&\Leftrightarrow  &\int_{a}^b F_Y(\tau)d\tau - \int_a^{t^+} F_Y(\tau)d\tau - \left( \int_{a}^b F_X(\tau)d\tau - \int_a^{t^+} F_X(\tau)d\tau \right) \geq 0 \\
&\Leftrightarrow & F^2_Y(b)-F^2_Y(t^+) - \left( F^2_X(b)-F^2_X(t^+) \right) \geq 0 \\
&\Leftrightarrow & F^2_Y(b)-F^2_Y(t^+) \geq F^2_X(b)-F^2_X(t^+),
\end{eqnarray*}
which is the continuous case of condition \eqref{eq:th_gains} of Theorem \ref{th:discrete_MSD}. What remains to be proved is that under a discrete state-space condition \eqref{eq:th_gains} implies this condition. We prove this by contrapositive, i.e., if the continuous case of condition \eqref{eq:th_gains} does not hold, then condition \eqref{eq:th_gains} does not hold either. Let $\Theta(t^+) = F^2_Y(b)-F^2_Y(t^+)$ and $\Psi(t^+) = F^2_X(b)-F^2_X(t^+)$ for all $t^+ \in \{x_i ~|~ x_i > r\}$. Assume now the continuous case of condition \eqref{eq:th_gains} does not hold for some $t^* > r$ in gains, then $\Theta(t^*) - \Psi(t^*) < 0$. First, if $t^* < \min\{x_i~|~x_i > r\}$, then we have $\Theta( \min\{x_i~|~x_i > r\}) - \Psi(\min\{x_i~|~x_i > r\}) < 0$, since $\frac{\partial}{\partial t^+}\Theta(t^+)  = -F_Y(t^+) \leq - F_X(t^+) = \frac{\partial}{\partial t^+}\Psi(t^+) = 0$ for all $t^+ <\min\{x_i~|~x_i > r\}$, which implies condition \eqref{eq:th_gains} does not hold. Second, if $t^* \in (\max\{x_i~|~x_i > r\}, ~b]$, given that $\frac{\partial}{\partial t^+}\Theta(t^+) - \frac{\partial}{\partial t^+}\Psi(t^+) = - F_Y(t^+) + F_X(t^+) = - F_Y(t^+) + 1$, together with $F_Y(t^+) \leq 1$, we obtain $\frac{\partial}{\partial t^+}\Theta(t^+) - \frac{\partial}{\partial t^+}\Psi(t^+) \geq 0$. Thus, $\Theta(t^+) - \Psi(t^+)$ is non-decreasing on $(\max\{x_i~|~x_i > r\}, ~b]$. Then, it follows that $\Theta(\max\{x_i~|~x_i > r\}) - \Psi(\max\{x_i~|~x_i > r\}) \leq \Theta(t^*) - \Psi(t^*) <0$. Thus, condition \eqref{eq:th_gains} does not hold. Finally, if $t^* \in [\min\{x_i~|~x_i > r\},~\max\{x_i~|~x_i > r\}]$, then there exist $x_a$, $x_b$ such that $t^* \in [x_a, x_b]$, $\Psi$ is linear and $\Theta$ is concave on this interval. Thus, $\Theta(t^*) - \Psi(t^*) < 0$ would imply that either $\Theta(x_a) - \Psi(x_a) < 0$ or $\Theta(x_b) - \Psi(x_b) < 0$. However, in either case, this implies then condition \eqref{eq:th_gains} does not hold.

In turn, we obtain by condition \eqref{eq:msd_losses} of Proposition \ref{prop:msd}, for all $t^- \in [a, r]$,
\begin{eqnarray*}
& &\int_a^{t^-} \big[F_Y(\tau) - F_X(\tau)\big]d\tau \geq 0 \\
&\Leftrightarrow &\int_{a}^{t^-} F_Y(\tau)d\tau - \int_{a}^{t^-} F_X(\tau)d\tau \geq 0 \\
&\Leftrightarrow & F^2_Y(t^-) - F^2_X(t^-) \geq 0 \\
&\Leftrightarrow & F^2_Y(t^-) \geq F^2_X(t^-),
\end{eqnarray*}
which gives the continuous case of condition \eqref{eq:th_losses} of Theorem \ref{th:discrete_MSD}. Similarly, we prove that under a discrete state-space condition \eqref{eq:th_losses} implies this condition. We prove this by contrapositive again, i.e., if the continuous case of condition \eqref{eq:th_losses} does not hold, then condition \eqref{eq:th_losses} does not hold either. Let $\Gamma(t^-) = F^2_Y(t^-) - F^2_X(t^-)$ for all $t^- \in \{y_i ~|~ y_i \leq r\}$. Assume now the continuous case of condition \eqref{eq:th_losses} does not hold for some $t^* \leq r$ in losses, then $\Gamma(t^*)< 0$. First, if $t^* \in [a, ~\min\{y_i~|~y_i \leq r\})$, evaluating $\frac{\partial}{\partial t^-}\Gamma(t^-) = F_Y(t^-) - F_X(t^-) = 0- F_X(t^-) \leq 0$ shows that $\Gamma(t^-)$ is non-increasing on $[a, ~\min\{y_i~|~y_i \leq r\})$. Hence, we obtain $\Gamma(\min\{y_i~|~y_i \leq r\}) \leq \Gamma(t^*) < 0$, which implies that condition \eqref{eq:th_losses} does not hold. Second, if $t^* > \max\{y_i~|~y_i \leq r\}$, since $\frac{\partial}{\partial t^-}\Gamma(t^-) = F_Y(t^-) - F_X(t^-) = 1 - F_X(t^-) \geq 0$ for all $t^- >\max\{y_i~|~y_i \leq r\}$, however, we have $\Gamma(\max\{y_i~|~y_i \leq r\}) \leq \Gamma(t^*) < 0$, i.e., a violation, which implies then condition \eqref{eq:th_losses} does not hold. Finally, assume that $t^* \in [\min\{y_i~|~y_i \leq r\},~\max\{y_i~|~y_i \leq r\}]$, then there exist $y_a$, $y_b$ such that $t^* \in [y_a, y_b]$, on which $F_Y^2$ is linear and $F_X^2$ is convex. Thus, $\Gamma(t^*) < 0$ would imply that either $\Gamma(y_a) < 0$ or $\Gamma(y_b) < 0$. Either case would then imply condition \eqref{eq:th_losses} does not hold. \qed
\endproof

\noindent \textbf{Proof of Theorem \ref{th:discrete_MWSD}} \label{appx:th_discrete_MWSD}

The proof of Theorem \ref{th:discrete_MSD} shows that conditions \eqref{eq:th_gains}
and \eqref{eq:th_losses} imply conditions \eqref{eq:msd_gains} and \eqref{eq:msd_losses} in Proposition \ref{prop:msd}, respectively. Here we only need to show that condition \eqref{eq:mwsd} of Proposition \ref{prop:mwsd} holds if and only if condition \eqref{eq:th_pwfs} of Theorem \ref{th:discrete_MWSD} holds.

We first prove the `if' part by contrapositive, i.e., if condition \eqref{eq:mwsd} of Proposition \ref{prop:mwsd} does not hold, then condition \eqref{eq:th_pwfs} of Theorem \ref{th:discrete_MWSD} does not hold either. Now assume that condition \eqref{eq:mwsd} of Proposition \ref{prop:mwsd} does not hold for some $t^*$ on $[t_d^-, t_d^+)$, where $t_d^- = \sup(\{a\} \cup \{t\leq r~|~F_X(t) \leq d^-,~F_Y(t) \leq d^- \})$ and $t_d^+ = \inf(\{b\} \cup \{t\geq r~|~F_X(t) \geq 1-d^+,~F_Y(t) \geq 1-d^+\})$, then $F_X(t^*) > F_Y(t^*)$, together with $\max\{F_X(t^*),~F_Y(t^*)\} > d^-$ and $\min\{F_X(t^*),~F_Y(t^*)\} < 1-d^+$, yields $F_X(t^*) > d^-$ and $F_Y(t^*) <1-d^+$. Moreover, choose $l \in \{1,\dots,n\}$ such that $t^* \in [y_{l-1}, y_l)$, then $F_Y(t^*) = F_Y(y_{l-1}) <1-d^+$. Together, these inequalities imply that $F_X(t^*) > \max\{F_Y(y_{l-1}),~d^-\}$. However, evaluating the left-hand side of condition \eqref{eq:th_pwfs} for index $i=l$ gives
$$ \tilde{F}_X(y_l) = \sum_{k|x_k<y_l} p_k \geq \sum_{k|x_k \leq t^*} p_k = F_X(t^*) > \max\{F_Y(y_{l-1}),~d^-\},$$
which implies that condition \eqref{eq:th_pwfs} does not hold.

Next, we prove the `only if' part by assuming that condition \eqref{eq:mwsd} of Proposition \ref{prop:mwsd} holds. In order to show condition \eqref{eq:th_pwfs} holds, we evaluate its left-hand side for an arbitrary $i \in N$ such that $F_Y(y_{i-1}) < 1-d^+$ to obtain
$$ \tilde{F}_X(y_i) = \sum_{k|x_k<y_i} p_k = F_X(\max_k\{x_k~|~x_k<y_i\}).$$
By condition \eqref{eq:mwsd}, it holds that $F_X(t) \leq F_Y(t) ~\forall~t \in [t_d^-, t_d^+)$, where $t_d^- = \sup(\{a\} \cup \{t\leq r~|~F_X(t) \leq d^-,~F_Y(t) \leq d^- \})$ and $t_d^+ = \inf(\{b\} \cup \{t\geq r~|~F_X(t) \geq 1-d^+,~F_Y(t) \geq 1-d^+\})$. First, suppose that $t_d^+ > r$. Then $t_d^+ = \min (\{b\} \cup \{t\geq r~|~F_X(t) \geq 1-d^+,~F_Y(t) \geq 1-d^+\})$. Now assume, by contradiction, that $t_d^+ < y_i$. Since $F_Y(t)$ is non-decreasing, it follows then $F_Y(t_d^+) \leq F_Y(y_{i-1}) < 1-d^+$, which contradicts the condition $F_Y(t_d^+) \geq 1-d^+$. Hence, $t_d^+ \geq y_i$. Now suppose that $t_d^+ = r$. Then $r \geq y_i$, and it follows that $\max_k \{x_k|x_k < y_i\} < y_i \leq r = t_d^+$. Together, these imply that $\max_k\{x_k~|~x_k<y_i\} \in [t_d^-, t_d^+)$. Then, evaluating condition \eqref{eq:mwsd} of Proposition \ref{prop:mwsd} at $t=\max_k\{x_k~|~x_k<y_i\}$ gives
$$F_X(\max_k\{x_k~|~x_k<y_i\}) \leq F_Y(\max_k\{x_k~|~x_k<y_i\}) \leq \sup_{t<y_i} F_Y(t) \leq F_Y(y_{i-1}) \leq \max\{F_Y(y_{i-1}),~d^-\}.$$
Thus, condition \eqref{eq:th_pwfs} is satisfied. \qed
\endproof

\noindent \textbf{Proof of Lemma \ref{lem:milp_gains}} \label{appx:lem_milp_gains}

We prove the `if' part first. Assume that $(\varphi, ~\psi, ~\theta, ~z)$ is a feasible solution to constraints $\eqref{eq:milp_i} - \eqref{eq:milp_v}$, then $(\varphi, ~\psi, ~\theta, ~z)$ is also feasible to MILP problem $\eqref{eq:gains3} - \eqref{eq:gains7}$. Thus, the objective function \eqref{eq:gains3} value evaluated in $(\varphi, ~\psi, ~\theta, ~z)$ is greater than (or equal to) the value of $F^2_Y(x_i) + F^2_X(b) - F^2_X(x_i)$. Since $(\varphi, ~\psi, ~\theta, ~z)$ satisfies \eqref{eq:milp_v}, the objective function \eqref{eq:gains3} value is no greater than $F_Y^2(b)$. Together, these two results imply that for an arbitrary $x_i > r$, 
$$F^2_Y(x_i) + F^2_X(b) - F^2_X(x_i) \leq \sum_{k=1}^{n} p_k \varphi_{ik} + \sum_{k=1}^{n} p_k \psi_k - \sum_{k=1}^{n} p_k \theta_{ik} \leq F_Y^2(b), $$
where $\sum_{k=1}^{n} p_k \varphi_{ik} + \sum_{k=1}^{n} p_k \psi_k - \sum_{k=1}^{n} p_k \theta_{ik}$ provides an upper bound for $F^2_Y(x_i) + F^2_X(b) - F^2_X(x_i)$ for any feasible solution $(\varphi, ~\psi, ~\theta, ~z)$. Thus, condition \eqref{eq:th_gains} of Theorem \ref{th:discrete_MSD} holds.

We now prove the `only if' part by assuming condition \eqref{eq:th_gains} of Theorem \ref{th:discrete_MSD} holds. Take any $x_i > r$ and assume that $(\varphi^*, ~\psi^*, ~\theta^*, ~z^*)$ is the optimal solution to MILP problem $\eqref{eq:gains3} - \eqref{eq:gains7}$. Clearly, this optimal solution also satisfies constraints $\eqref{eq:milp_i} - \eqref{eq:milp_iv}$. Evaluating the objective function \eqref{eq:gains3} value yields 
$$ \sum_{k=1}^{n} p_k \varphi_{ik}^* + \sum_{k=1}^{n} p_k \psi_k^* - \sum_{k=1}^{n} p_k \theta_{ik}^* = F^2_Y(x_i) + F^2_X(b) - F^2_X(x_i) \leq F_Y^2(b), $$
which implies that \eqref{eq:milp_v} is satisfied. \qed
\endproof

\noindent \textbf{Proof of Lemma \ref{lem:milp_losses}} \label{appx:lem_milp_losses}

We first prove the `if' part by assuming that $\delta$ is a feasible solution to constraints $\eqref{eq:milp_vi} - \eqref{eq:milp_vii}$, which implies then $\delta$ is also a feasible solution to LP problem \eqref{eq:losses2}. Hence, the objective function value of LP problem \eqref{eq:losses2} evaluated at $\delta$ is no less than the value of $F_X^2(y_i)$. Given that $\delta$ satisfies constraint \eqref{eq:milp_vii}, then the objective function value of \eqref{eq:losses2} is less or equal to $F_Y^2(y_i)$. Therefore, combining these two results yields that for an arbitrary $y_i \leq r$,
$$ F_X^2(y_i) \leq \sum_{k=1}^{n} p_k \delta_{ik} \leq F_Y^2(y_i), $$
where $\sum_{k=1}^{n} p_k \delta_{ik}$ provides an upper bound for $F_X^2(y_i)$ for any feasible solution $\delta$, which thus implies that condition \eqref{eq:th_losses} of Theorem \ref{th:discrete_MSD} holds.

Next, we prove the `only if' part. Assume that condition \eqref{eq:th_losses} of Theorem \ref{th:discrete_MSD} holds. Then, take any $y_i \leq r$ and construct $\delta^*$ such that $\delta^*$ is optimal to LP problem \eqref{eq:losses2}. Clearly, constraint \eqref{eq:milp_vi} is satisfied and evaluating the objective function value of \eqref{eq:losses2} gives
$$ \sum_{k=1}^{n} p_k \delta_{ik}^* = F_X^2(y_i) \leq  F_Y^2(y_i). $$
Thus, \eqref{eq:milp_vii} is also satisfied. \qed
\endproof

\noindent \textbf{Proof of Theorem \ref{th:milp_msd}} \label{appx:th_milp_msd}

First, we prove $(i)$ by assuming that there exists $X \in \mathbb{X}$ such that $X \succeq Y$. It follows then by Theorem \ref{th:discrete_MSD} that conditions \eqref{eq:th_gains} and \eqref{eq:th_losses} hold. Condition \eqref{eq:th_gains} implies by Lemma \ref{lem:milp_gains} that there exist $\varphi \in \mathbb{R}^{n \times n}_+$, $\psi \in \mathbb{R}^{n}_+$, $\theta \in \mathbb{R}^{n \times n}_+$, and $z \in \{0, 1\}^{n \times n}$ that satisfy $\eqref{eq:milp_i} - \eqref{eq:milp_v}$. Recall that \eqref{eq:milp_v} holds for all $i \in N$ such that $x_i > r$ and is not required to hold in any state $i$ where $x_i < r$. Since each $x_i$ is an unknown decision variable, this requires introducing \eqref{eq:milp_viii} and augmenting \eqref{eq:milp_v} with a big $M$-term to obtain \eqref{eq:milp_ix}. This ensures that \eqref{eq:milp_ix} binds $\xi_i = 1$ if $x_i > r$, otherwise \eqref{eq:milp_ix} becomes redundant. Moreover, by Lemma \ref{lem:milp_losses}, condition \eqref{eq:th_losses} implies that there exists $\delta \in \mathbb{R}^{n^-\times n}_+$ that satisfies $\eqref{eq:milp_vi} - \eqref{eq:milp_vii}$. Therefore, $(i)$ of Theorem \ref{th:milp_msd} holds.

Next, we prove $(ii)$ by assuming that there exist $x = (x_1, \dots, x_n) \in \mathcal{X}$, $\varphi \in \mathbb{R}^{n \times n}_+$, $\psi \in \mathbb{R}^{n}_+$, $\theta \in \mathbb{R}^{n \times n}_+$, $z \in \{0, 1\}^{n \times n}$, $\xi \in \{0, 1\}^{n}$, and $\delta \in \mathbb{R}^{n^- \times n}_+$ that satisfy $\eqref{eq:milp_i} - \eqref{eq:milp_iv}$ and $\eqref{eq:milp_vi} - \eqref{eq:milp_ix}$. Using the same reasoning obtained in $(i)$, it can be shown that this is equivalent to that there exist $\varphi \in \mathbb{R}^{n \times n}_+$, $\psi \in \mathbb{R}^{n}_+$, $\theta \in \mathbb{R}^{n \times n}_+$, and $z \in \{0, 1\}^{n \times n}$ that satisfy $\eqref{eq:milp_i} - \eqref{eq:milp_v}$ and $\delta \in \mathbb{R}^{n^-\times n}_+$ that satisfies $\eqref{eq:milp_vi} - \eqref{eq:milp_vii}$. By Lemmas \ref{lem:milp_gains} and \ref{lem:milp_losses}, this equivalence implies that conditions \eqref{eq:th_gains} and \eqref{eq:th_losses} hold, respectively. Consequently, Theorem \ref{th:discrete_MSD} implies then $X \succeq Y$ and hence, $(ii)$ of Theorem \ref{th:milp_msd} holds. \qed
\endproof

\noindent \textbf{Proof of Lemma \ref{lem:milp_pwfs}} \label{appx:lem_milp_pwfs}

First, we prove the `if' part. Assume that $\zeta$ is a feasible solution to constraints $\eqref{eq:milp_fsd1} - \eqref{eq:milp_fsd2}$. Then, $\zeta$ must be feasible to MILP problem \eqref{eq:milp_fsd0}. Therefore, the objective function value of MILP problem \eqref{eq:milp_fsd0} evaluated at $\zeta$ is greater than (or equal to) the value of $\tilde{F}_X(y_i)$. Since $\zeta$ satisfies also \eqref{eq:milp_fsd2}, the objective function value of \eqref{eq:milp_fsd0} is less or equal to  $\max\big\{F_Y(y_{i-1}), ~d^-\big\}$ for all $i \in N$ such that $F_Y(y_{i-1}) < 1-d^+$. Consequently, for an arbitrary $i \in N$ such that $F_Y(y_{i-1}) < 1-d^+$, we obtain
$$ \tilde{F}_X(y_i) \leq \sum_{k=1}^n p_k\zeta_{ik} \leq \max\big\{F_Y(y_{i-1}), ~d^-\big\}. $$
For any feasible solution $\zeta$, $\sum_{k=1}^n p_k\zeta_{ik}$ provides an upper bound for $\tilde{F}_X(y_i)$. Since the choice of $y_i$ is arbitrary, this implies that condition \eqref{eq:th_pwfs} of Theorem \ref{th:discrete_MWSD} is satisfied.

We then prove the `only if' part by assuming that condition \eqref{eq:th_pwfs} of Theorem \ref{th:discrete_MWSD} holds. First, let $\zeta^*$ be the optimal solution to MILP problem \eqref{eq:milp_fsd0}. Then, $\zeta^*$ clearly satisfies constraint \eqref{eq:milp_fsd1}. Now we evaluate the objective function value of \eqref{eq:milp_fsd0} for any $i \in N$ such that $F_Y(y_{i-1}) < 1-d^+$ to obtain
$$ \sum_{k=1}^n p_k\zeta_{ik}^* = \tilde{F}_X(y_i) \leq \max\big\{F_Y(y_{i-1}), ~d^-\big\}, $$
which implies that \eqref{eq:milp_fsd2} is also satisfied. \qed
\endproof

\noindent \textbf{Proof of Theorem \ref{th:milp_mwsd}} \label{appx:th_milp_mwsd}

We first prove $(i)$ by assuming that there exists $X \in \mathbb{X}$ such that $X \succeq_{d^-}^{d^+} Y$. By Theorem \ref{th:discrete_MWSD}, it follows that $X \succeq Y$ and condition \eqref{eq:th_pwfs} holds. Then, according to $(i)$ in Theorem \ref{th:milp_msd}, $X \succeq Y$ implies that there exist $\varphi \in \mathbb{R}^{n \times n}_+$, $\psi \in \mathbb{R}^{n}_+$, $\theta \in \mathbb{R}^{n \times n}_+$, $z \in \{0, 1\}^{n \times n}$, $\xi \in \{0, 1\}^{n}$, and $\delta \in \mathbb{R}^{n^- \times n}_+$ that satisfy $\eqref{eq:milp_i} - \eqref{eq:milp_iv}$ and $\eqref{eq:milp_vi} -\eqref{eq:milp_ix}$. Moreover, by Lemma \ref{lem:milp_pwfs}, condition \eqref{eq:th_pwfs} implies that there exists $\zeta \in \{0, 1\}^{n \times n}$ that satisfies $\eqref{eq:milp_fsd1} - \eqref{eq:milp_fsd2}$. Together, these two results imply that $(i)$ of Theorem \ref{th:milp_mwsd} holds. 

We now prove $(ii)$ by assuming that there exist $x = (x_1, \dots, x_n) \in \mathcal{X}$, $\varphi \in \mathbb{R}^{n \times n}_+$, $\psi \in \mathbb{R}^{n}_+$, $\theta \in \mathbb{R}^{n \times n}_+$, $z \in \{0, 1\}^{n \times n}$, $\xi \in \{0, 1\}^{n}$, $\delta \in \mathbb{R}^{n^- \times n}_+$, and $\zeta \in \{0, 1\}^{n \times n}$ that satisfy $\eqref{eq:milp_i} - \eqref{eq:milp_iv}$, $\eqref{eq:milp_vi}-\eqref{eq:milp_ix}$, and $\eqref{eq:milp_fsd1} - \eqref{eq:milp_fsd2}$. By $(ii)$ in Theorem \ref{th:milp_msd}, there exists then $X \in \mathbb{X}$ such that $X \succeq Y$. In addition, Lemma \ref{lem:milp_pwfs} implies condition \eqref{eq:th_pwfs} holds. Collectively, Theorem \ref{th:discrete_MWSD} implies then $X \succeq_{d^-}^{d^+} Y$. Hence, $(ii)$ of Theorem \ref{th:milp_mwsd} holds. \qed
\endproof

\section{Heatmaps on asset compositions of MSD portfolios $X^{rf}$, $X^{md}$ and MWSD portfolios $X_{0.18}^{rf}$, $X_{0.18}^{md}$}\label{appx:heatmaps}

\begin{figure}[h!]
\includegraphics[width=\textwidth, keepaspectratio]{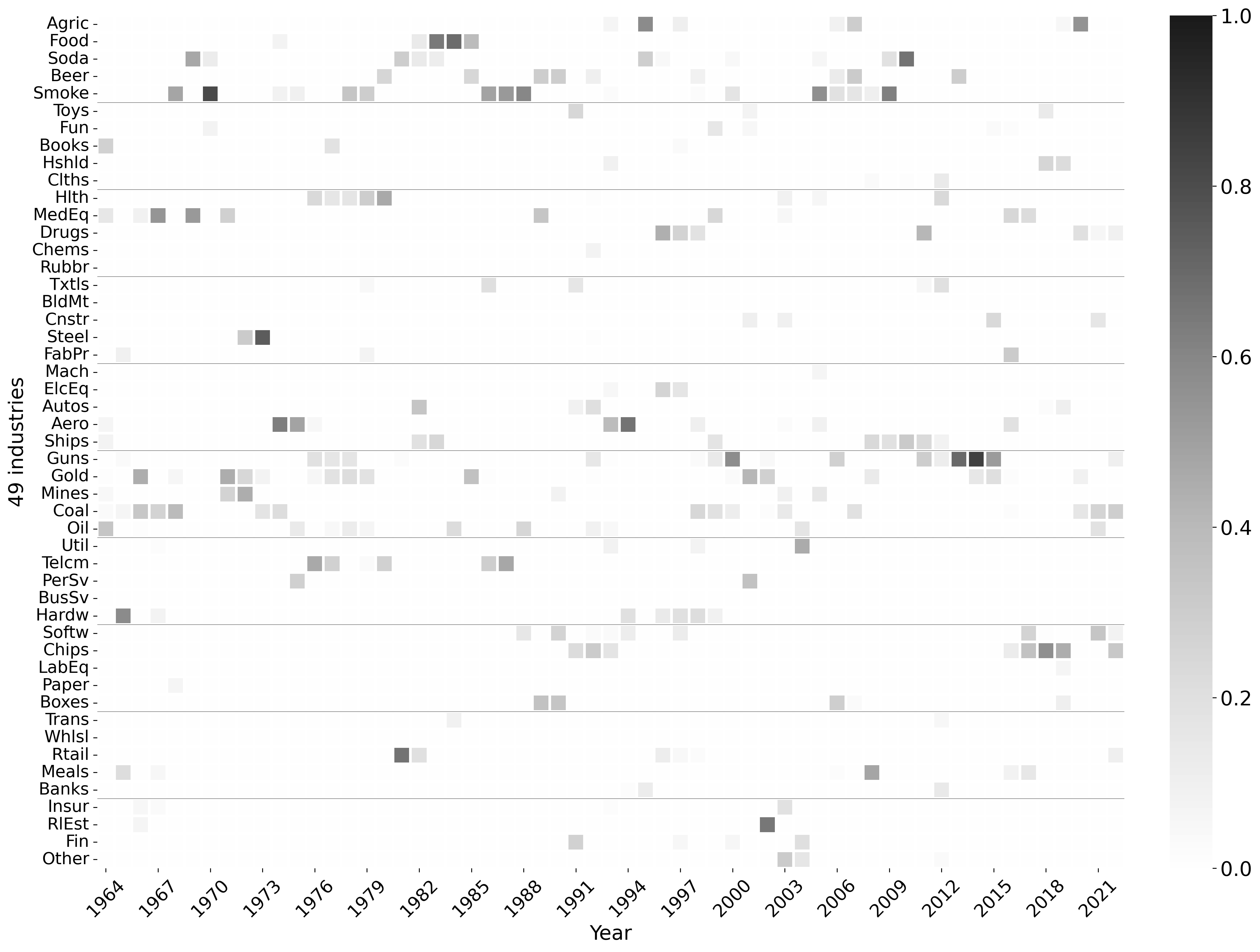}
\caption{Asset weights $\lambda_j \in [0,1]$ of the $m=49$ industries in MSD portfolio $X^{rf}$ for each year 1964--2024.\label{ec_fig1}}
\end{figure}

\begin{figure}[h!]
\includegraphics[width=\textwidth, keepaspectratio]{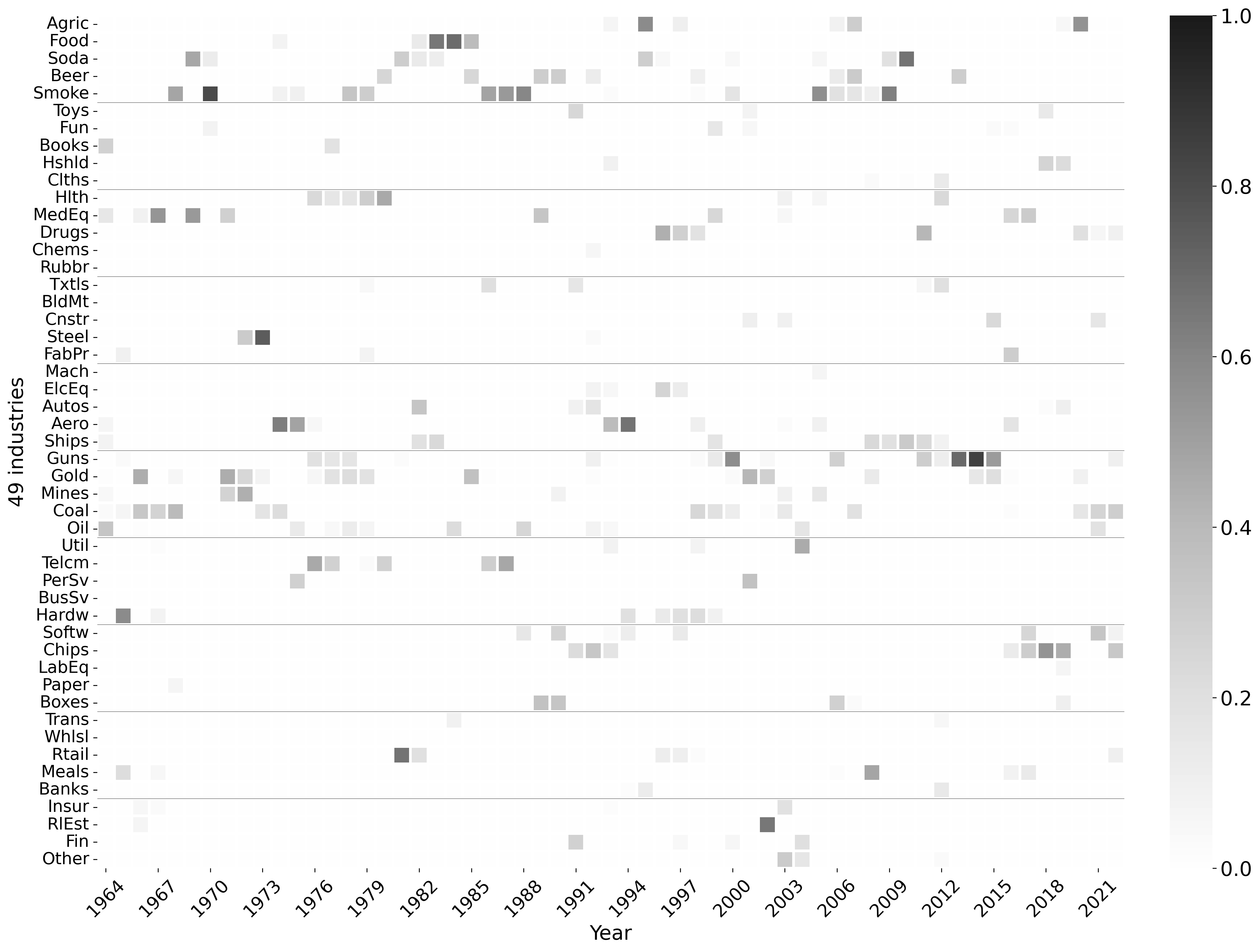}
\caption{Asset weights $\lambda_j \in [0,1]$ of the $m=49$ industries in MSD portfolio $X^{md}$ for each year 1964--2024.\label{ec_fig2}}
\end{figure}

\begin{figure}[h!]
\includegraphics[width=\textwidth, keepaspectratio]{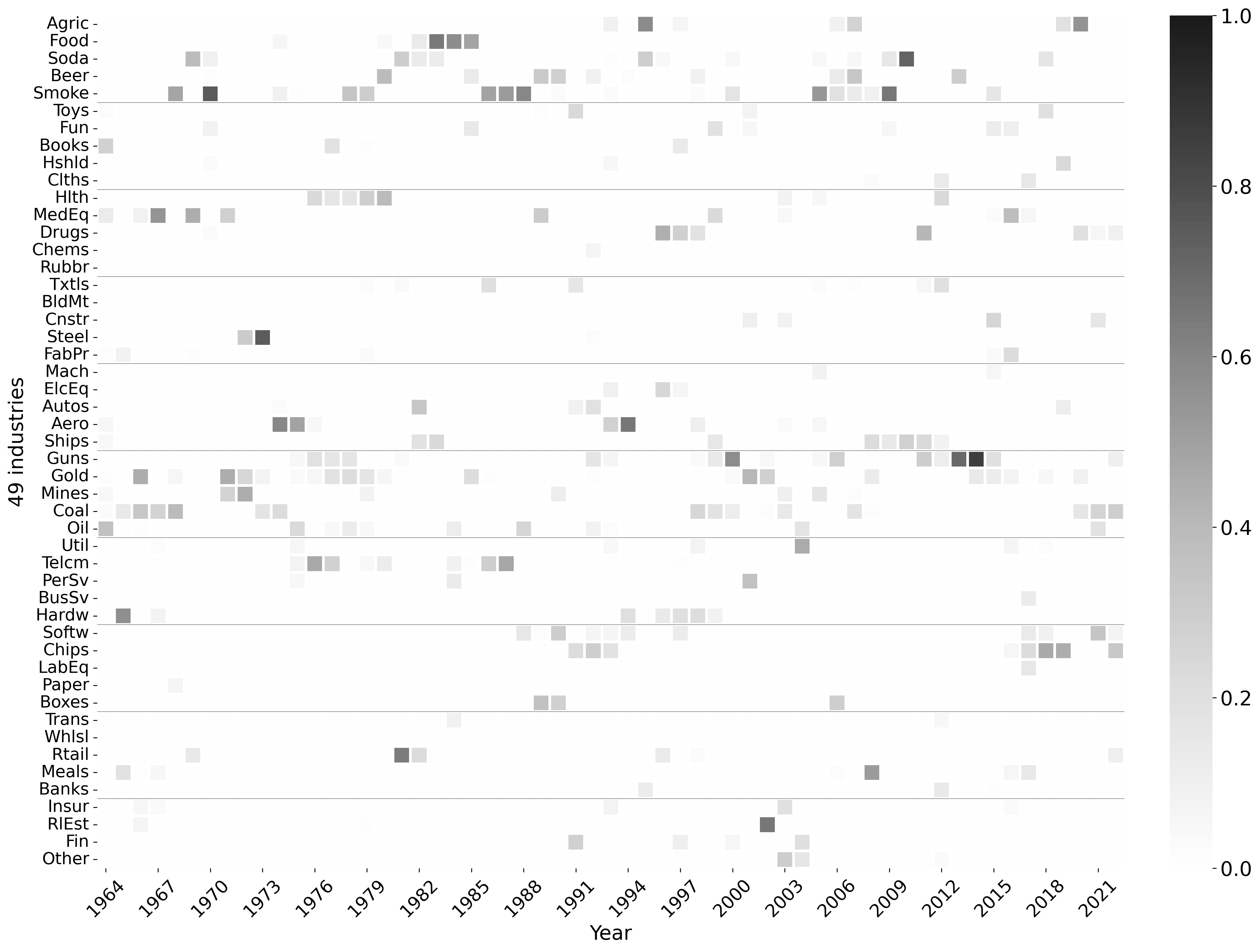}
\caption{Asset weights $\lambda_j \in [0,1]$ of the $m=49$ industries in MWSD portfolio $X_{0.18}^{rf}$ for each year 1964--2024.\label{ec_fig3}}
\end{figure}

\begin{figure}[h!]
\includegraphics[width=\textwidth, keepaspectratio]{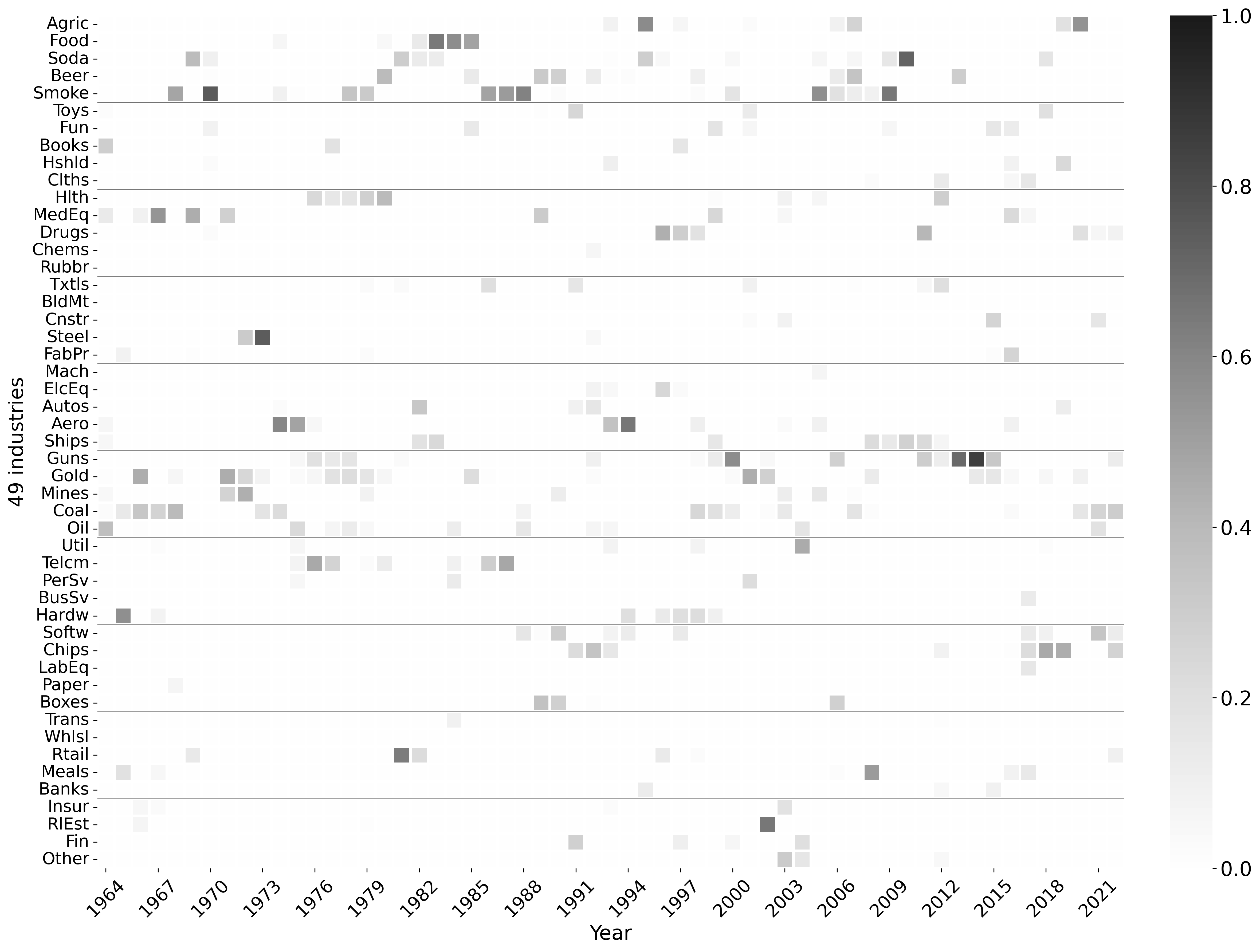}
\caption{Asset weights $\lambda_j \in [0,1]$ of the $m=49$ industries in MWSD portfolio $X_{0.18}^{md}$ for each year 1964--2024.\label{ec_fig4}}
\end{figure}

\end{document}